\newtheorem{thm}{Theorem}
\newtheorem{lemma}[thm]{Lemma}
\newtheorem{cor}[thm]{Corollary}
\newcommand\dd{\mathrm{d}}
\newcommand\Sext{\Sigma_{+}}
\newcommand\Sint{\Sigma_{-}}
\newcommand\Lext{L_{+}}
\newcommand\Lint{L_{-}}
\newcommand\vpar{\vartheta}
\title{Toroidal black holes in four dimensions}
\author[a]{Carlos Barceló,}
\author[b]{Gerardo Garc\'ia-Moreno,}
\affiliation[a,b]{Instituto de Astrof\'{\i}sica de Andaluc\'{\i}a (IAA-CSIC), Glorieta de la Astronom\'{\i}a, 18008 Granada, Spain}
\author[c]{Alejandro Jiménez Cano}
\affiliation[c]{Escuela T\'ecnica Superior de Ingenier\'ia de Montes, Forestal y del Medio Natural, Universidad Politécnica de Madrid, 28040 Madrid, Spain}
\emailAdd{carlos@iaa.es}
\emailAdd{gerargar.moreno@gmail.com}
\emailAdd{alejandro.jimenez.cano@upm.es}
\abstract{
From a purely geometric (kinematic) perspective, black holes in four dimensional spacetimes can have event horizons with arbitrary topologies. It is only when energy conditions are imposed that the horizon's topology is constrained to be that of a sphere. Despite this, exploring exotic horizon topologies remains theoretically intriguing since it allows to unveil structural aspects of General Relativity and gain intuition on energy condition violations. In the axisymmetric case, besides the well-known spherical topology, only a toroidal topology is consistent with the symmetry. Complete solutions, describing the entire exterior region of such toroidal black holes without singularities, have not been reported yet. To the best of our knowledge, the construction we present here is the first explicit example of a toroidal black hole solution in four spacetime dimensions that is free of singularities in the external region.
}
\begin{document}

\maketitle

\section{Introduction}
\label{Sec:Introduction}

Black holes are among the most fascinating predictions of General Relativity. Kinematically (on purely geometrical grounds) it is possible to imagine black holes with horizons of various shapes and topologies. However, the no-hair theorems, which are dynamical since they rely on Einstein equations, in addition to a bunch of additional assumptions constrain the metric of asymptotically flat general relativistic black holes to the biparametric Kerr family, see~\cite{Chrusciel2012}. A key assumption in deriving these results is that the entire external region, extending to asymptotic infinity, is vacuum. Relaxing this condition to allow for matter in the external region introduces the possibility of deforming the black hole. In the static and axisymmetric case it has been rigorously proven by some of the authors and collaborators~\cite{Barcelo2024} that an external matter distribution completely determines the shape of the horizon, as there is only one equilibrium configuration for the horizon of a black hole immersed in a given matter distribution. Any other shape would lead to an irregular geometry with infinite curvatures at the horizon. This result offers additional perspectives to a previous related analysis~\cite{Gurlebeck2015}. We expect that similar results would hold in more general situations, e.g. stationary instead of static configurations. 

Furthermore, in four spacetime dimensions, having stationary horizons with non-spherical topology requires that the external matter violates energy conditions~\cite[Prop. 9.3.2]{Hawking1973}. While in the case of toroidal black holes, this exotic matter could live far from the horizon, for higher genus this matter must permeate the horizon itself. Although such configurations are unlikely to be relevant in astrophysical contexts, since energy condition violations are expected only in extreme causal scenarios or near singularities, not in macroscopic settings, it is still interesting to analyze them from a theoretical point of view. Studying these exotic horizons provides valuable insights into the structural aspects of general relativity and helps develop intuition about the implications of energy condition violations.

In the restricted framework of staticity and axisymmetry, requiring that the axial Killing field does not degenerate at the horizon limits the possible horizon topologies to only spherical and toroidal. While the properties of local vacuum black holes in this setup (those with an open neighborhood around the horizon free of matter) were previously characterized~\cite{Geroch1982} (see also~\cite{Xanthopoulos1983}), explicit solutions extending to an asymptotically flat infinity were not explicitly provided. Invoking a classic theorem by Hawking~\cite[Prop. 9.3.2]{Hawking1973}, one can conclude that achieving an asymptotically flat region would either require violations of energy conditions or result in a singularity in the external region. Following earlier work by Thorne on static and axisymmetric configurations~\cite{Thorne1975}, Peters~\cite{Peters1979} attempted to construct a black hole with toroidal topology. However, the attempt revealed that configurations with regular event horizons inevitably led to naked singularities in the exterior region. The main goal of this paper is to present an alternative construction of toroidal black holes that avoid these singularities in the external region.

Black holes with toroidal horizons in four spacetime dimensions have previously been reported in the literature, but only in the presence of a negative cosmological constant (see~\cite{Senovilla2014} and references therein). From the perspective of Hawking’s classic result forbidding nontrivial horizon topologies, the cosmological constant effectively contributes to the energy-momentum tensor and leads to violations of the energy conditions throughout the spacetime, including at the horizon. In this respect, such solutions are not locally vacuum, unlike ours, which crucially rely on the fact that toroidal horizons can be compatible with a locally vacuum geometry.

Furthermore, Ref.~\cite{Kleihaus2019} presents toroidal black hole solutions in five spacetime dimensions sourced by a phantom scalar field, a matter content that violates the energy conditions, which apparently extends all the way to the horizon. Furthermore, the singularities exhibit a conical singularity, similar to the Emparan-Reall static black rings~\cite{Emparan2001}. The authors conjecture that analogous configurations might exist in four dimensions, although the field equations are not explicitly solved in that case. Moreover, the proposed line element for such a four-dimensional scenario does not clearly describe a \emph{vacuum} toroidal horizon (meaning that there is a neighborhood of the horizon that is Ricci flat) and may therefore differ from the configurations we analyze. Additional discussion of this line element can be found in~\cite[Sec. IV.C]{Cunha2024}.

Here is an outline of the article. In Sec.~\ref{Sec:Literature} we collect and present in an ordered way previous results and analyses available in the literature. We describe in Subsec.~\ref{Subsec:Theorems} previous general results about the local and global properties of toroidal black holes in static and axisymmetric setups and in Subsec.~\ref{Subsec:Peters} we review the attempt by Peters of building a toroidal black hole based on the previous work by Thorne. Secs.~\ref{Sec:Construction}-\ref{Sec:EC} constitute the core of the article. In Sec.~\ref{Sec:Construction} we present the construction of toroidal black holes interpolating from a flat spacetime region to a locally vacuum black hole region. We first show in Subsec.~\ref{Subsec:Try1} why a single matching hypersurface is insufficient to join the two regions; we then build  a model with two shells and an intermediate interpolating region in Subsec.~\ref{Subsec:Try2}; and show in Subsec.~\ref{Subsec:Try_one_shell} how one of the shells can be eliminated by adequately choosing the interpolating region. In Sec.~\ref{Sec:EC} we discuss where the violations of energy conditions appear. We argue that such violations are due to two features that occur simultaneously through the geometry and illustrate them separately with two toy models in Subsecs.~\ref{SubSec:InfinteTorus}-\ref{Subsec:ExternalThinShell}. We then prove in Subsec.~\ref{Subsec:Finite} that qualitatively these violations are generic features of our model near the shell matching with the external region. We then provide some examples with specific interpolating regions and the specific profile of energy condition violations in Subsec.~\ref{Subsec:specific}. We conclude in Sec.~\ref{Sec:Conclusions} with a summary of the work and a discussion of future interesting lines of work. 

 \paragraph*{\textbf{Notation and conventions.}}

We use the signature $(-,+,+,+)$  for the spacetime metric and we work in geometrized units in which $c=G_N=1$. For the curvature tensors we use the conventions in the book of Wald~\cite{Wald1984}, i.e., $[\nabla_a, \nabla_b] V^c =: - R_{abd}{}^c V^d$, $R_{ab}:=R_{acb}{}^c$. We use lower, case indices from the beginning of the alphabet as indices in the 4-dimensional manifold $(a,b...)$ and lower case indices from the middle of the alphabet $(i,j...)$ for indices within the 3-dimensional submanifolds defined in the text. When taking limits, we will also denote as $\lim_{x \searrow a}$ the limit from the right, and $ \lim_{x \nearrow a}$ the limit from the left. When dealing with functions defined on an interval, e.g. $f(x)$ with $x \in [a,b]$, the limit of its derivative approaching the boundaries of the interval will be denoted as $f'(a) := \lim_{x \searrow a} f'(x)$, $f'(b) := \lim_{x \nearrow b} f'(x)$ and similarly for higher-order derivatives, as well as for partial derivatives. We use the symbol $\sim$ to indicate that only the dominant terms of an asymptotic expansion have been kept. The symbol $\Delta$ used at several places in the work denotes a difference and not the Laplacian, which is always denoted by $\nabla^2$.

\section{Overview of previous literature}
\label{Sec:Literature}

\subsection{General results}
\label{Subsec:Theorems}

All the static and axisymmetric toroidal black holes that are locally in vacuum\footnote{
    By \emph{locally in vacuum} we mean that the energy-momentum tensor vanishes in a neighborhood of the horizon.} 
were characterized in the seminal work of Geroch and Hartle~\cite{Geroch1982}, see also~\cite{Xanthopoulos1983} for additional insights. To be more precise, any static and axisymmetric metric can be locally expressed in Weyl coordinates as:
\begin{align}
    \dd s^2= - e^{2U} \dd t^2 + e^{-2U} \left[  e^{2V} \left( \dd r^2 +   \dd z^2 \right) +  r^2 \dd \varphi^2 \right]\,,
    \label{Eq:Weyl}
\end{align}
for certain functions $U=U(r,z)$ and $V=V(r,z)$. The angular coordinate is periodically identified $\varphi \sim \varphi + 2 \pi$, and it is normalized so that the vector $\partial_{\varphi}$ generating the axial symmetry has orbits of period $2 \pi$. The vacuum Einstein equations take the following form for this ansatz:
\begin{align}
    & \partial_r^2 U + \frac{1}{r} \partial_r U + \partial_z^2 U = 0, \label{Eq:Laplace} \\
    & \partial_r V = r \left[ (\partial_r U)^2 - ( \partial_z U)^2 \right], \label{Eq:DrV} \\
    & \partial_z V = 2 r \partial_r U \partial_z U. \label{Eq:DzV}
\end{align}
Notice that Eq.~\eqref{Eq:Laplace} is an axisymmetric Laplace equation for $U$ in the three-dimensional fiduciary Euclidean space described in cylindrical coordinates $(r,z,\varphi)$, as $U$ does not depend on $\varphi$, while Eqs.~\eqref{Eq:DrV}-\eqref{Eq:DzV} are two first-order equations. They can be straightforwardly integrated to obtain $V$ once $U$ has already been determined. The event horizon of a static configuration corresponds to a surface of infinite redshift~\cite{Carter1973}. From the point of view of the Laplace equation, such a solution exhibits some distributional source, although at $r = 0$, which strictly speaking is not covered by this coordinate system, see~\cite{Barcelo2024} and references therein. For the line element in Eq.~\eqref{Eq:Weyl}, an infinite redshift surface corresponds to $U \to - \infty$, so the event horizon must be located precisely at $r \to 0$ in order for its spatial sections to have finite area (since this requires $g_{\varphi\varphi}$ to remain bounded). Therefore, these coordinates are valid for $r>0$, and they break down as $r \to 0$.

The simplest toroidal black hole that one can think of can be constructed following a simple procedure. Take a solution of the Laplace equation corresponding to having as a source an infinite rod located at $r=0$. In fact we are solving a Poisson equation with a distributional source, but as we will see, this source is actually ``fictitious'', meaning it is not associated with any real energy-momentum tensor. The functions $U,V$ associated with this source are
\begin{align}
    U_T &= \log \left( \frac{r}{2m} \right), \label{Eq:UT} \\
    V_T &= \log \left( \frac{r}{m} \right). \label{Eq:VT}
\end{align}
As the solution does not depend on $z$, we can construct a compactified version of it by just identifying $(t, r, z=m,\varphi) \equiv (t, r, z = -m, \varphi + \alpha)$.  This $\alpha$ is a twisting angle and can be safely taken to be zero.
The line element becomes:
\begin{equation}
    \dd s^2= - \left(\frac{r}{2m}\right)^2 \dd t^2 + 4\left[ \dd r^2 +   \dd z^2  + m^2 \dd \varphi^2 \right].
    \label{Eq:GerochHartleBH}
\end{equation}
At this point, we recall the Rindler line element~\cite[p. 149]{Wald1984} 
\begin{equation}
    \dd s^2= - a^2 {\rm x}^2~\dd t^2 + \dd {\rm x}^2 + \dd {\rm y}^2  + \dd {\rm z}^2,
    \label{Eq:Rindler}
\end{equation}
which describes the Minkowski spacetime in coordinates adapted to an observer undergoing constant proper acceleration $a$ in the Cartesian ${\rm x}$ direction. We can recognize that the previous line element~\eqref{Eq:GerochHartleBH} is simply the Rindler metric with the transverse coordinates $({\rm y},{\rm z})$ periodically identified. Therefore, the spacetime is no longer a patch of Minkowski space (with $\mathbb{R}^4$ topology) but a flat toroidal semi infinite tube with $r$ its longitudinal direction (i.e., the spatial sections of the exterior metric have topology $\mathbb{T}^2 \times \mathbb{R}^+$).\footnote{We use the notation $\mathbb{T}^2$ for the 2-torus $\mathbb{S}^1 \times \mathbb{S}^1$.} In fact, through a change of coordinates~\cite{Barcelo2024}:
\begin{align}
    & R^2 - T^2 = 4 r^2, 
    & \frac{T}{R} = \tanh \left(\frac{t}{4m}\right),
\end{align}
we can make the metric to be explicitly flat
\begin{align}
    \dd s^2 = - \dd T^2 + \dd R^2 + 4 \dd z^2 + 4m^2 \dd  \varphi^2. 
\end{align}
Thus, it is clear that there is no matter content associated with this solution. Furthermore, the new coordinates allow the solution to be extended to the interior of the black hole region. The situation is completely parallel to the static and axisymmetric but topologically spherical vacuum black hole, i.e., the Schwarzschild black hole. There, the solution corresponds to a ``fictitious'' source for the Laplace equation which is a constant density rod of finite length $2m$.

Alternatively, one could begin by considering Weyl metrics where the functions $U$ and $V$ are periodic in $z$, thereby imposing a global toroidal topology from the outset. The simple geometry described above corresponds to the solution of the Laplacian in a flat toroidal fiducial space, associated with a ``fictitious'' source modeled as a uniform closed rod density. Since the Laplace equation is linear, any additional solution $\Delta U$ can be added to an existing one to obtain a new solution $\tilde{U}= U + \Delta U$.\footnote{
    We recall that the function $\tilde{V}$ in the metric can be obtained by integrating Eqs.~\eqref{Eq:DrV}-\eqref{Eq:DzV} for the function $\tilde{U}$.}
The resulting geometry displays a regular horizon, provided that $\Delta U$ is analytic at $r = 0$. Physically, the added term $\Delta U$ corresponds to extra sources located away from the original, ``fictitious'' source at $r = 0$; specifically, they are supported wherever the Laplacian of $\Delta U$ is nonzero, i.e., where $\nabla^2 (\Delta U) \neq 0$. These modifications can be interpreted as distortions of the toroidal black hole\footnote{
    In terms of the embedding of constant $t$ surfaces in Euclidean space this corresponds geometrically to a finite toroidal tube in the same way that the Schwarzshcild metric corresponds to a finite spherical tube.}
induced by external matter sources~\cite{Geroch1982,Barcelo2024}. Indeed, such deformations are required in order to have a toroidal horizon if we want to ensure asymptotic flatness.

Now, we want to find not just toroidal black hole tubes, but asymptotically flat toroidal black holes. It is when trying to connect the toroidal geometry at the horizon with an asymptotically flat geometry, that one finds that matter external to the horizon must violate at least the Dominant Energy Condition (DEC)~\cite{Hawking1971}. The purpose of this paper is precisely to build specific models of static and axisymmetric toroidal black hole spacetimes that are locally in vacuum (with no cosmological constant either) and such that they are asymptotically flat. Our construction in fact violates the Null Energy Condition (NEC), and as such all the rest of the standard pointwise energy conditions, namely the Weak Energy Condition (WEC), the DEC and the Strong Energy Condition (SEC).

\subsection{Locally vacuum toroidal black holes: previous attempts}
\label{Subsec:Peters}

An attempt to construct an explicit toroidal black hole was done by Peters~\cite{Peters1979} following up on a previous static and axisymmetric solution to the Einstein vacuum equations discussed by Thorne~\cite{Thorne1975}. However, Peters found out that a regular event horizon for that specific solution requires the presence of some singularities in the external region. Let us discuss it in detail. 

We start by describing Thorne's solution. The idea is to consider the Einstein vacuum equations \eqref{Eq:Laplace}-\eqref{Eq:DzV} for a static and axisymmetric metric expressed in Weyl coordinates~\eqref{Eq:Weyl}. In this case, the potential $U$ is taken to be originated in a line segment singularity (we will later see that this is generically promoted to a real spacetime naked singularity) lying between $z = -a$ and $z = + a$ in the $z$-axis $(r=0)$. In addition, the potential is constrained to have $\partial_z U=0$ at two disks located at $z= \pm a$ and having radius $b$. Thorne shows that far away from the constraining disks, i.e., for $\sqrt{r^2 + z^2 } \gg b $, the potential is almost spherical with $U = - M/r + \mathcal{O}(r^{-2}) $ and $V = \mathcal{O}(r^{-2})$, and hence the metric is asymptotically flat with mass $M$. 

From the point of view of the solution to the Laplace equation, it corresponds to the solution which obeys the following conditions: 
\begin{enumerate}
    \item The solution behaves as $U  \sim (M/a) \log (r) $ as $r \rightarrow 0$ with $z \in (-a,a)$.
    \item $\partial_z U \to 0$ as $z \rightarrow \pm a$ with $r \in (0,b)$. 
    \item $\nabla^2 U = 0$ everywhere for $r >0$. 
    \item $U \sim - M / \sqrt{r^2 + z^2}$ as $\sqrt{r^2+z^2} \rightarrow \infty$.
    \item $V$ obeys Eqs~\eqref{Eq:DrV}-\eqref{Eq:DzV} everywhere and vanishes as $\sqrt{r^2+z^2} \rightarrow \infty$. 
\end{enumerate}
To make the topology of the source toroidal, one needs to include additional ingredients which are the identifications of the outer faces of the upper and lower disks, i.e., 
\begin{align}
    \lim_{\epsilon \searrow 0} (t,r,z = a + \epsilon , \varphi) = \lim_{\epsilon \searrow 0} (t,r,z = - a - \epsilon , \varphi), \quad r \in (0,b],
\end{align}
and the inner faces of the upper and lower disks too
\begin{align}
    \lim_{\epsilon \searrow 0} (t,r,z = a - \epsilon , \varphi) = \lim_{\epsilon \searrow 0} (t,r,z = - a + \epsilon , \varphi), \quad r \in (0,b].
\end{align}
This automatically ensures that the singularity has a toroidal topology. The solution contains three free parameters $(M,a,b)$. In general, it is not possible to find a closed analytic expression for the profile of $U$ (nor $V$ of course), but several approximations are available for $a/b \ll 1$ and $b/a \ll1$ respectively, see~\cite{Thorne1975,Peters1979} for the expressions. 

At this point, once the metric is characterized, one can realize that there are two potential spacetime singularities associated with this metric: the one at $r \rightarrow 0$ for $z \in (-a,a)$ and another one which is precisely lying at the edge of the disks (the common edge, given the periodic identification), i.e., $r \to b$ and $z = \pm a$. Thorne (see Sec. IIF of~\cite{Thorne1975}) identified that removing the latter requires imposing a constraint among the three parameters $(M,a,b)$. Regarding the other putative singularity at $r \rightarrow 0$ for $z \in (-a,a)$, we have that the asymptotic form of the metric as $r \rightarrow 0$ is:
\begin{align}
    \dd s^2  = -  r^{2M/a} \dd t^2 + r^{(2M/a) \times (M/a - 1)} (\dd r^2 + \dd z^2) + r^{-2(M/a-1)} \dd  \varphi^2 + (\text{finite terms}).
\end{align}
The condition $M/a = 1$ is required to have a smooth horizon at $r=0$. Indeed, for $M/a = 1$, Peters~\cite{Peters1979} identified that this leading order for the metric is precisely the flat spacetime metric in Rindler coordinates, where the coordinate $z$ has been periodically identified, i.e., the line element that we discussed above for $U =U_T$ and $V = V_T$, see Eqs.~\eqref{Eq:UT}-\eqref{Eq:VT}. Moreover, he also showed that the subleading terms of the solution still provide regular curvatures as $r \to 0$, hence allowing for the extension of the solution to the inside region (they are analytic $\Delta U$ and $\Delta V$ pieces of the ones discussed above).  However, we also saw that those pieces need to arise from some form of matter content or singular content in the outside region. In fact, in this case they arise from the singularity at the edges of the disks. Peters showed (see Sec. II~\cite{Peters1979}) that removing the singularity at the edges of the disks requires that $M/a >2$, which is clearly incompatible with the $M/a = 1$ requirement of a smooth horizon. A pictorial representation of the solution for arbitrary values of the parameters can be found in Fig.~\ref{Fig:Thorne_Solution}.
\begin{figure}
\begin{center}
\includegraphics[scale=1.1]{./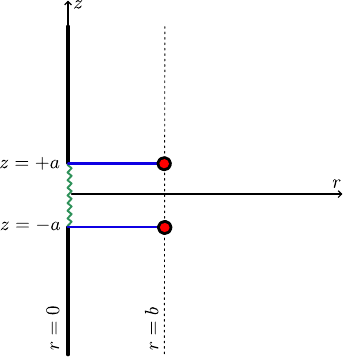}
\caption{Pictorial representation of Thorne toroidal solution. The singularity at $r = 0$ and $z \in (-a,a)$ is depicted as a zigzag green line, and the disks that are identified are depicted in blue. The red dots at the edge of the disks represent the putative singularity that arises.}
\label{Fig:Thorne_Solution}
\end{center}
\end{figure} 
Thus, it is impossible within this solution to simultaneously eliminate the singularity at $r \to 0$ and the external singularities at the edges of the disks. One could have anticipated this based on Hawking's theorem since there is no matter content in the outside region.

\section{Building toroidal black holes}
\label{Sec:Construction}
In this section, we provide an intuitive construction of toroidal black holes without singularities in the external region. The geometries presented here describe the transition from a flat spacetime to a region containing a toroidal black hole horizon, the latter being locally described by the line element in Eq.~\eqref{Eq:Weyl} and the functions in Eqs.~\eqref{Eq:DrV}-\eqref{Eq:DzV}. For that purpose, we distinguish three regions of spacetime that we call $\Omega_1$, $\Omega_2$ and $\Omega_3$. The regions $\Omega_2$ and $\Omega_3$ are delimited by the values of a certain coordinate $\ell$, $\ell\in(\Lint, \Lext)$ and $\ell\in (0, \Lint)$, respectively, whereas $\Omega_1$ represents an outermost flat spacetime region with a torus removed. Notice that, by construction, we are taking $0<\Lint<\Lext$. In addition, there will be two special matching hypersurfaces, $\Sext$ and $\Sint$, at $\ell= \Lext$ and $\ell=\Lint$ joining, respectively, $\Omega_1$ and $\Omega_2$, and $\Omega_2$ and $\Omega_3$. See Fig.~\ref{Fig:Om1Om2Om3} for a pictorial representation.

\begin{figure}
    \centering
    \includegraphics[scale=1.1]{./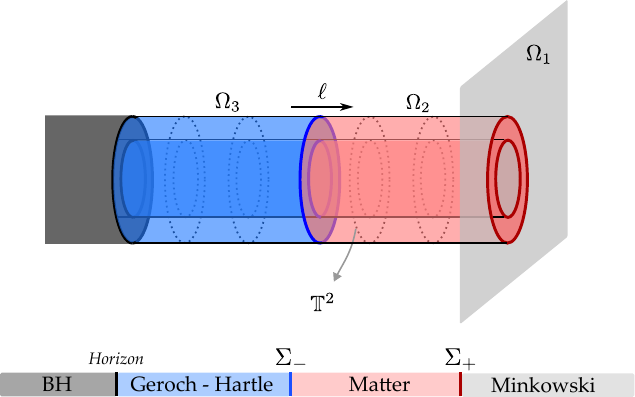}
    \caption{Schematic representation of the toroidal black hole constructed in this work.}
    \label{Fig:Om1Om2Om3}
\end{figure}

\paragraph*{\textbf{Exterior flat region $\Omega_1$}.}
The first region $\Omega_1$ is a flat spacetime region described by the set of coordinates $\{X^a\}=\{T,X,Y,Z\}$ in terms of which the metric looks like
\begin{align}
    \dd s^2_1 = -\dd T^2 + \dd X^2 + \dd Y^2 + \dd Z^2\,.
    \label{Eq:Region1}
\end{align}
This represents the asymptotic external region of the black hole and extends to a surface $\Sext$ with the topology of a torus, which in this set of coordinates is described by the following equation
\begin{align}
    \left( \sqrt{X^2 + Y^2} - \Lext \right)^2 + Z^2 =  b^2,
    \label{Eq:TorusEq}
\end{align}
where $\Lext$ is the radial distance from the origin to the central circumference of the tube and the positive parameter $b$ is the radius of the tube. See Fig.~\ref{Fig:Toro}. Note that the coordinates $(X,Y,Z)$ are restricted to values satisfying $( \sqrt{X^2 + Y^2} - \Lext )^2 + Z^2 >  b^2$. The metric induced in the surface $\Sext$ is clearly that of a curved torus. Due to Gauss--Bonnet theorem we know that the total integrated scalar curvature must be zero. Thus, this surface contains regions with positive as well as regions with negative intrinsic scalar curvature.
\begin{figure}
\begin{center}
\includegraphics[scale=1.1]{./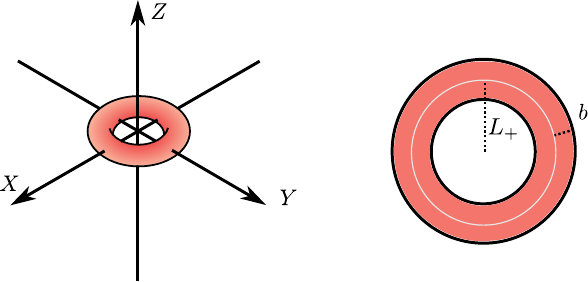}
\caption{Representation of the outermost region $\Omega_1$ for a fixed time $T$ and the limiting hypersurface $\Sext$ (in red).}
\label{Fig:Toro}
\end{center}
\end{figure} 

%
\paragraph*{\textbf{Interior flat region $\Omega_3$}}
We also consider a region $\Omega_3$ which is described by the following line element: 
\begin{align}
\label{Eq:gOm3}
    \dd s^2_3 = - \frac{\ell^2}{m^2} \dd t^2 + \dd \ell^2 +  \frac{m^2}{\pi^2}  \dd z^2 +  m^2 \dd \varphi^2,
\end{align}
where $z \sim z +2 \pi$ and $\varphi \sim \varphi + 2 \pi$. This metric describes the same geometry as \eqref{Eq:GerochHartleBH}. Indeed the expression \eqref{Eq:gOm3} can be obtained from Eq.~\eqref{Eq:GerochHartleBH} by performing the redefinitions:
\begin{equation}
    z\to \frac{m}{\pi}z\,, \qquad t\to 2t\,,\qquad r \to \frac{\ell}{2}\,,
\end{equation}
followed by $m\to m/2$. Here, the coordinate $\ell$ takes values in the open interval between $0$ and a certain finite value $\ell=\Lint$. The latter corresponds to the hypersurface $\Sint$, which for any fixed value of the time coordinate has the geometry of a flat torus (vanishing 2-dimensional intrinsic curvature). Therefore, it corresponds to a finite length tube foliated by flat toroidal 2-surfaces.
\paragraph*{\textbf{Interpolating region $\Omega_2$}}
Finally, we consider the region $\Omega_2$ that will interpolate between the flat torus $\Sint$ and the curved torus $\Sext$, respecting the Darmois--Israel junction conditions \cite{Darmois1927, Israel1966}. For this region, we can take a line element of the form: 
\begin{equation}\label{eq:region2General}
    \dd s^2_2 = - H(\ell) \dd \tau^2 + \dd \ell^2 + \mathcal{F}(\ell, \beta) \dd \alpha^2 + b^2 \dd  \beta^2\,,
\end{equation}
with the angular coordinates periodically identified $\alpha \sim \alpha + 2 \pi$ and $\beta \sim \beta + 2 \pi$, and with $H$ and $\mathcal{F}$ positive functions satisfying the boundary conditions:
\begin{equation}\label{eq:calFH_interpcond}
    H(\Lint)=H(\Lext)=1\,,\quad \mathcal{F}(\Lint, \beta) = \frac{m^2}{\pi^2}\,,\quad \mathcal{F} (\Lext, \beta) =  \left(\Lext + b \cos \left(\beta\right) \right)^2.
\end{equation}
A specifically simple choice of interpolating function is 
\begin{equation}\label{eq:FG}
    \mathcal{F}(\ell, \beta) =  F(\ell) \left( \Lext+ b \cos (\beta) \right)^2 + G(\ell)\,,
\end{equation}
which makes the line element to be of the form:
\begin{equation}\label{eq:gregion2}
    \dd s^2_2 = - H(\ell) \dd \tau^2 + \dd \ell^2 + \left[ F(\ell) \left( \Lext+ b \cos (\beta) \right)^2 + G(\ell) \right] \dd \alpha^2 + b^2 \dd  \beta^2.
\end{equation}
Notice that this region is spatially foliated by 2-dimensional tori, for each value of $\tau$ and $\ell$ (dotted lines in Fig.~\ref{Fig:Om1Om2Om3}). In particular, $\beta$ runs around the loop of the torus and $\alpha$ is the angle of revolution of the torus, which corresponds to an isometric direction (i.e., $\partial_\alpha$ is a Killing vector of \eqref{eq:gregion2}). The coordinates $\alpha$ and $\beta$ are depicted in Fig.~\ref{fig:alphabeta} for clarity. Moreover, the interpolation conditions for this choice read:
\begin{equation}\label{eq:FGH_interpcond}
    H(\Lint)=H(\Lext)=1\,,\quad F(\Lint) = 0\,,\quad F(\Lext) = 1\,,\quad G(\Lint) = \frac{m^2}{\pi^2}\,,\quad G(\Lext) = 0\,.
\end{equation}

\begin{figure}
\begin{center}
\includegraphics[scale=1.1]{./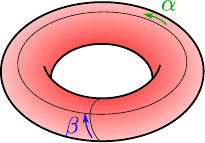}
\caption{Geometrical meaning of the coordinates $\alpha$ and $\beta$.}
\label{fig:alphabeta}
\end{center}
\end{figure} 

This region is confined between $\ell = \Lext$ which describes the surface $\Sext$ and $\ell=\Lint$ corresponding to $\Sint$. In the remainder of the article we will specifically focus on this line element for the sake of concreteness, although we will prove a general result for the line element~\eqref{eq:region2General} in Sec.~\ref{Sec:EC} since the proof is straightforwardly generalizable to that case. The simplest example that we can think of is the following choice of functions:
\begin{align}\label{eq:exampleFGH}
    F(\ell) = \dfrac{\ell^2-\Lint^2}{\Lext^2-\Lint^2}\,,\qquad G(\ell) = \dfrac{m^2}{\pi^2} \dfrac{\Lext^2-\ell^2}{\Lext^2-\Lint^2}\,,\qquad H(\ell)=1\,.
\end{align}

\subsection{No model with a single thin-shell}
\label{Subsec:Try1}

At first sight, one might wonder why the interpolating region $\Omega_2$ is necessary, rather than simply using a thin shell to match the exterior and interior. The main problem that one faces is that there would be a discontinuity in the metric, given that the induced metric on that shell would necessarily inherit a curved metric from the outside and a flat one from the inside. 

\begin{figure}
\begin{center}
\includegraphics[scale=1.2]{./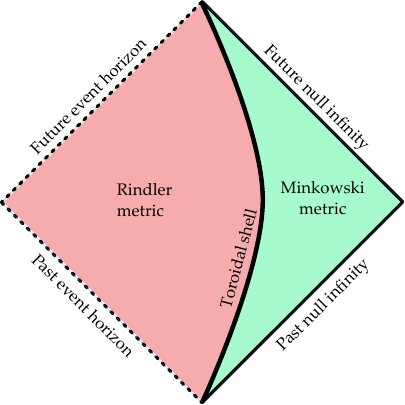}
\caption{Pictorial representation of the simplest setup to match an exterior flat spacetime with an interior Rindler metric with a periodic identification of two of the transverse coordinates. The matching would need to occur on a surface with the topology of a torus. However, given that the intrinsic metrics that the torus inherits from the inside and the outside are different (whereas the one of the inside is flat, the one of the outside is curved), it is not possible to match them without a singular behavior.}
\label{Fig:Toroidal_Shell_Try}
\end{center}
\end{figure} 

Explicitly, imagine that we want to match an external empty flat spacetime region with an interior region described by the line element~\eqref{Eq:Weyl}. This means that we take the spacetime to be of the form~\eqref{Eq:Weyl} for $r \in (0,R)$, and we match it with a Minkowski spacetime from which we have removed the interior of a torus at every time. See Fig.~\ref{Fig:Toroidal_Shell_Try} for a pictorial representation. Although both surfaces display the same topology, it is not possible to match them smoothly within General Relativity. The first of the Darmois--Israel junction conditions in General Relativity requires that the metric is continuous across the matching surface in order to have a well-defined distributional curvature. The continuity of the metric implies that the two induced metrics (one from each side) need to correspond to the same intrinsic geometry. In the matching scenario of our toy model the torus inherits from the inside region a flat geometry, i.e., it has no intrinsic curvature, whereas the induced metric obtained from the exterior region is curved.

One might wonder, whether it could be possible to choose a different smooth embedding $\Sext$ of a torus in $\mathbb{R}^3$  allowing for a flat induced metric from $\Omega_1$. However, it is not possible to construct a $C^{2}$-embedding of this kind. To see this, let us proceed by \emph{reductio ad absurdum}. Suppose there exists such an embedding of the torus in flat space with identically vanishing Gauss curvature. Given that it is a compact surface, it would be possible to enclose it with a sphere, which has positive Gaussian curvature at every point. Now, suppose that we reduce the radius of the sphere until it touches the torus tangentially for the first time. On the one hand, we have kept the torus inside the sphere during all this process. On the other hand, the Gaussian curvatures of both surfaces disagree at the contact point and, since they are tangent, the surface of higher curvature (the sphere) must lie inside the other one, contradicting the previous statement. This proves the impossibility to embed a flat torus in flat space with at least $C^2$-differentiability. This forbids this simple construction and requires us to perform some more subtle matching between the external and internal region by allowing for an interpolation four-dimensional region where there is a non-distributional matter content. 
 
It is still interesting to note that if one relaxes the $C^2$ requirement and only demands $C^1$-differentiability, this proof breaks down (since the Gauss curvature is no longer defined for the torus). In fact, it was shown by the seminal theorems of Nash and Kuiper~\cite{Nash1954,Kuiper1955,Kuiper1955b} that $C^1$-embeddings of flat torus into the flat spacetime actually exist. To make sense of this physically, one would need to interpret a discontinuity on the metric by some physical argument.\footnote{Notice however that $C^2$-embeddings of the flat torus are possible in more than 3 spatial dimensions, e.g. the Clifford torus in $\mathbb{R}^4$.}

\subsection{Model with an interpolating region and two thin shells}
\label{Subsec:Try2}

The regions $\Omega_1$ and $\Omega_3$ are flat and hence correspond to vacuum solutions. The region $\Omega_2$ displays a nontrivial matter content. In addition, generic simple interpolations do also have distributional matter at the matching surfaces $\Sint$ and $\Sext$. In fact, the distributional energy-momentum tensor $S_{ij}$ at those hypersurfaces can be directly obtained from the Darmois--Israel junction conditions as
\begin{align} \label{eq:SK}
    8  \pi S_{ij} = - [[K_{ij}]] + [[K]] h_{ij} \,,
\end{align}
where $h_{ij}$ is the appropriate induced metric and the double square brackets of a quantity, $\left[ \left[ \mathcal{O} \right] \right]$, represents its jump, (i.e., the size of the discontinuity) across the surface. Normal directions will always be taken pointing outwards (from $\Omega_3$ to $\Omega_2$ and from $\Omega_2$ to $\Omega_1$).

\paragraph*{\textbf{Energy-momentum tensor in region $\Omega_2$}}
In this section we compute the energy-momentum tensor supporting the non-flat geometry in the region $\Omega_2$. 
The only non-vanishing components of the energy-momentum tensor are the following:
\begin{align}
    8 \pi T^{\tau}{}_{\tau}& = -\frac{1}{4b^2}\left[\left(\frac{\partial_\beta \mathcal{F}}{\mathcal{F}}\right)^2 + b^2 \left(\frac{\partial_\ell \mathcal{F}}{\mathcal{F}}\right)^2 - 2 \left(  
    \frac{\partial^2_\beta \mathcal{F}}{\mathcal{F}} + b^2 \frac{\partial^2_\ell \mathcal{F}}{\mathcal{F}} \right) 
    \right]\,, \\[2ex]
    8 \pi T^{\ell}{}_{\ell} & = -\frac{1}{4b^2}\left[\left(\frac{\partial_\beta \mathcal{F}}{\mathcal{F}}\right)^2 - 2   
    \frac{\partial^2_\beta \mathcal{F}}{\mathcal{F}} - b^2 \frac{H'}{H}\frac{\partial_\ell \mathcal{F}}{\mathcal{F}} \right]
    \,,\\[2ex]
    8 \pi T^{\ell}{}_{\beta} & = \frac{1}{4}\left[\frac{\partial_\beta \mathcal{F}\partial_\ell \mathcal{F}}{\mathcal{F}^2} - 2 \frac{\partial_\beta\partial_\ell \mathcal{F}}{\mathcal{F}} \right] \qquad = 8 \pi b^2 T^{\beta}{}_{\ell}\,, \\[2ex]
    8 \pi T^{\alpha}{}_{\alpha} & 
    = \frac{1}{4}\left[2\frac{H'' }{H }-\left(\frac{H'}{H}\right)^2\right]\,,\\[2ex] 
    8 \pi T^{\beta}{}_{\beta} & 
    = \frac{1}{4}\left[ 2
    \frac{\partial^2_\ell \mathcal{F}}{\mathcal{F}}
    -\left(\frac{\partial_\ell \mathcal{F}}{\mathcal{F}}\right)^2
    +2\frac{H''}{H}-\left(\frac{H'}{H}\right)^2
    +\frac{H'}{H}\frac{\partial_\ell \mathcal{F}}{\mathcal{F}}\right]\,,
\end{align}
where we have omitted the coordinate dependencies of the functions $\mathcal{F},H$ and their derivatives.  It has the form
\begin{equation}
    \begin{pmatrix}
        T^{\tau}{}_{\tau} & 0 & 0 & 0 \\
        0 & T^{\ell}{}_{\ell} & 0 & T^{\ell}{}_{\beta} \\
        0 & 0 & T^{\alpha}{}_{\alpha} & 0 \\
        0 & T^{\beta}{}_{\ell} & 0 & T^{\beta}{}_{\beta}
    \end{pmatrix}\,,
\end{equation}
which has a nontrivial timelike eigenvector ($\partial_\tau$) and therefore belongs to the Segré-Pleba\'nski class [111,1] \cite[Ch. 5]{Stephani2003} (type I in Hawking-Ellis' notation \cite[Sec. 4.3]{Hawking1973}).

\paragraph*{\textbf{Distributional energy momentum tensor in $\Sext$}}

Let us go first with the shell that is located in $\Sext$. For that purpose, we take the parametrization of the torus in terms of the coordinates $\{y^i\} = \{\tau, \alpha, \beta\}$:
\begin{align}
    & T ( \tau) = \tau, \\
    & X (\alpha, \beta) = \cos (\alpha) \Phi(\beta), \\
    & Y (\alpha, \beta) = \sin (\alpha) \Phi(\beta), \\
    & Z (\alpha, \beta) = b  \sin (\beta)\,, 
\end{align}
where we have introduced the convenient notational shortcut:
\begin{equation}
    \Phi(\beta) := \Lext + b \cos(\beta)\,.
    \label{Eq:Phidef}
\end{equation}
In terms of these coordinates, we can compute the following basis forms in $\Sext$
\begin{align}
    e^{a}{}_{i} = \frac{\partial X^a}{\partial y^i},  
\end{align}
which leads to
\begin{align}
    e^T{}_i \dd y^i& = \dd \tau, \\
   e^X{}_i \dd y^i&= - \sin (\alpha) \Phi(\beta) \dd \alpha - b \cos (\alpha) \sin (\beta) \dd \beta, \\
    e^Y{}_i \dd y^i& = \cos (\alpha) \Phi(\beta) \dd \alpha - b \sin (\alpha) \sin (\beta) \dd \beta, \\
   e^Z{}_i \dd y^i& = b \cos (\beta) \dd \beta.
\end{align}
The induced metric on the surface reads:
\begin{align}\label{eq:h+Region1}
    (\dd s_+^2)_{\Omega_1} = - \dd \tau^2 +\Phi(\beta)^2 \dd \alpha^2 + b^2 \dd \beta^2\,. 
\end{align}
We can now compute the unit normal vector to the surface and we find:
\begin{align}
    (n_+^a)_{\Omega_1} = \frac{1}{b} \left[ \left(1- \frac{\Lext}{\sqrt{X^2+Y^2}}  \right)X  \delta^{a}{}_{X} + \left(1- \frac{\Lext}{\sqrt{X^2+Y^2}}  \right)Y \delta^{a}{}_{Y} + Z \delta^{a}{}_{Z} \right]
\end{align}
The extrinsic curvature of the surface $\Sext$, as seen from outside region $\Omega_1$ is:
\begin{align}\label{eq:KijOmega1}
    (K^+_{ij})_{\Omega_1} = e^a{}_i e^b{}_j(\nabla_a n_{+b})_{\Omega_1} .
\end{align}
Given that the spacetime is flat and it is expressed in Minkowski coordinates, we have that the Christoffel symbols vanish and we can substitute $\nabla\to \partial$ in \eqref{eq:KijOmega1}. We have the following non-vanishing components:
\begin{align}
    & (K^+_{\alpha \alpha})_{\Omega_1} =  \Phi(\beta) \cos(\beta) , \\
    & (K^+_{\beta \beta})_{\Omega_1} = b . 
\end{align}
The trace reads:
\begin{align}
    (K^+)_{\Omega_1} = \frac{1}{b} + \frac{\cos(\beta)}{\Phi(\beta)} \,.
\end{align}
We can now repeat the exercise from the inside region $\Omega_2$. The surface, which corresponds to $ \ell = \Lext$, is parametrized by $\{\tau, \alpha, \beta\}$
\begin{align}
    (\dd s_+^2)_{\Omega_2} = - \dd \tau^2 + \Phi(\beta)^2 \dd \alpha^2 + b^2 \dd \beta^2.
    \label{Eq:CurvedTorus}
\end{align}
In this case, the vector normal to the surface is
\begin{align}
    (n^a_+)_{\Omega_2} = \delta^a{}_r, 
\end{align}
and we can again compute the non-vanishing components of the extrinsic curvature to find:
\begin{align}
    (K^+_{\tau\tau})_{\Omega_2} &=  -\frac{1}{2} H'(\Lext),\\
    (K^+_{\alpha \alpha})_{\Omega_2} &= \frac{1}{2} \partial_\ell \mathcal{F}(\Lext, \beta_+),
\end{align}
and the trace reads:
\begin{align}
    (K^+)_{\Omega_2} = \frac{1}{2}\left[H'(\Lext) + \frac{\partial_\ell\mathcal{F}(\Lext,\beta)}{\Phi(\beta)^2}\right] \,.
\end{align}
The only nontrivial discontinuities in the extrinsic curvatures at $\Sext$ are given by:
\begin{align}
    [[ K^{+}_{\tau\tau} ]] &= \frac{1}{2} H'(\Lext), \label{Eq:K+tautau}\\
    [[ K^{+}_{\alpha\alpha} ]] &= \Phi(\beta)\cos(\beta) -\frac{1}{2} \partial_\ell \mathcal{F}(\Lext, \beta_+) , \\ 
    [[ K^{+}_{\beta\beta}]] &= b,  \label{Eq:K+bebe}\\
    [[K^{+}]] &= \frac{1}{b}+\frac{\cos(\beta)}{\Phi(\beta)}-\frac{1}{2} H'(\Lext)-\frac{1}{2}\frac{\partial_\ell\mathcal{F}(\Lext,\beta)}{\Phi(\beta)^2},
\end{align}
and hence the non-trivial components of the energy-momentum tensor read: 
\begin{align}
    & 8 \pi S^+_{\tau\tau} = -\frac{1}{b}- \frac{\cos(\beta)}{\Lext + b \cos (\beta)} + \frac{\partial_\ell \mathcal{F}(\Lext,\beta)}{2(\Lext + b \cos (\beta))^2}, \label{Eq:S+tautau} \\
    & 8 \pi S^+_{\alpha\alpha} =  (\Lext + b \cos (\beta))^2\left(\frac{1}{b}-\frac{1}{2}H'(\Lext)\right), \label{Eq:S+alpha alpha}\\
    & 8 \pi S^+_{\beta\beta} = \left[\frac{\cos(\beta)}{\Lext + b \cos (\beta)}-\frac{1}{2}H'(\Lext)-\frac{\partial_\ell \mathcal{F}(\Lext,\beta)}{2(\Lext + b \cos (\beta))^2}\right]b^2\,, \label{Eq:S+betabeta}
\end{align}
where we have inserted $\Phi(\beta)$ back using Eq.~\eqref{Eq:Phidef}.
\paragraph*{\textbf{Distributional energy momentum tensor in $\Sint$}.}

Now we repeat the exercise with the shell located at $\Sint$. From the outside region $\Omega_2$ we take $ \ell = \Lext$ and take the parametrization $\{\tau, \alpha, \beta\}$ to get the following induced metric on the surface:
\begin{align}\label{eq:ds-Om2}
   (\dd s_-^2)_{\Omega_2} = -\dd  \tau^2 + \frac{m^2}{\pi^2} \dd  \alpha^2 + b^2 \dd  \beta^2,
\end{align}
and the extrinsic curvature is
\begin{align}
    (K^-_{\tau\tau})_{\Omega_3} &= -\frac{1}{2} H'(\Lint) \,,\\
    (K^-_{\alpha\alpha})_{\Omega_3} &= \frac{1}{2} \partial_\ell\mathcal{F}(\Lint,\beta)\,. 
\end{align}
The trace in this case reads:
\begin{align}
    (K^-)_{\Omega_2} =  \frac{\pi^2}{2m^2} \partial_\ell\mathcal{F}(\Lint,\beta) +\frac{1}{2} H'(\Lint)\,.
\end{align}
From the inside region $\Omega_3$, $\Sint$ corresponds with the hypersurface with $ \ell = \Lint$. If we parameterize with $\{t = (m/\Lint )\tau, z=\alpha, \varphi=\beta\}$ we reach the following expression for the induced metric:
\begin{align}\label{eq:h-Region3}
    (\dd s_-^2)_{\Omega_3} = -  \dd \tau^2 + \frac{m^2}{\pi^2} \dd \alpha^2 + m^2 \dd  \beta^2\,.
\end{align}
Continuity with \eqref{eq:ds-Om2} requires:
\begin{equation}
    b^2 = m^2\,,
\end{equation}
which can be used to eliminate the parameter $m$ in favor of $b$. The extrinsic curvature from the inside can be determined by realizing that also the normal vector to the surface is
\begin{align}
    (n_-^a)_{\Omega_3} = \delta^a{}_{\ell}\,, 
\end{align}
and we find that there is only one non-vanishing component of the extrinsic curvature:
\begin{align}
    (K^-_{\tau\tau})_{\Omega_3} = - \frac{1}{\Lint}\,,
\end{align}
with the trace being 
\begin{align}
    (K^-)_{\Omega_3} = \frac{1}{\Lint} \,. 
\end{align}

The non-vanishing jumps in the extrinsic curvatures at $\Sint$  are given by:
\begin{align}
    & [[K^{-}_{\tau\tau}]] = -\frac{1}{2} H'(\Lint)+\frac{1}{\Lint}, \\
    & [[ K^{-}_{\alpha\alpha} ]] =  \frac{1}{2}\partial_\ell\mathcal{F}(\Lint,\beta) , \\ 
    & [[K^{-}]] = \frac{\pi^2}{2b^2} \partial_\ell\mathcal{F}(\Lint,\beta) +\frac{1}{2} H'(\Lint) -\frac{1}{\Lint},
\end{align}
leading to the following non-trivial components for the distributional energy-momentum tensor at $\Lint$ (after substituting $\Phi$):
\begin{align}
    8 \pi S^-_{\tau\tau} & = -\frac{\pi^2}{2b^2}\partial_\ell\mathcal{F}(\Lint,\beta) , \\
    8 \pi S^-_{\alpha\alpha} & = \frac{b^2}{\pi^2}\left[\frac{1}{2}H'(\Lint) -\frac{1}{\Lint}\right] , \\
    S^-_{\beta\beta} & =\pi^2 S^-_{\alpha\alpha} - b^2 S^-_{\tau\tau}  . 
\end{align}

\subsection{Model with an interpolating region and only an external thin shell}
\label{Subsec:Try_one_shell}

By appropriately choosing the functions $\mathcal{F}(\ell,\beta)$ and $H(\ell)$, we can match both regions without invoking a matter shell at $\ell=\Lint$. Indeed, we just need:
\begin{equation}
     \partial_\ell\mathcal{F}(\Lint,\beta) = 0\quad \forall\beta\,,\qquad H'(\Lint) = \frac{2}{\Lint}\,,\label{eq:no-shell}
\end{equation}
for the distributional energy-momentum tensor $S^-_{ij}$ to identically vanish. A simple set of functions satisfying the interpolation conditions \eqref{eq:calFH_interpcond} (namely, \eqref{eq:FGH_interpcond}) and the condition of having no shell at $\ell=\Lint$ \eqref{eq:no-shell} is given by an ansatz of the type \eqref{eq:FG} with the choices:
 \begin{align}
     F(\ell) &= \left(\frac{\ell-\Lint}{\Lext-\Lint}\right)^2\,,\nonumber\\
     G(\ell) &= \dfrac{m^2}{\pi^2} \frac{(\Lext-\ell)(\ell+\Lext-2\Lint)}{(\Lext-\Lint)^2}\,,\nonumber\\
     H(\ell)&=\frac{-2\ell^2+(\Lext+\Lint)(2 \ell-\Lint)}{\Lint(\Lext-\Lint)}\,,
\end{align}
all of which are strictly positive in the range $\ell\in(\Lint,\Lext)$.

It is not possible to do the same with the external shell for this class of geometries in which the line element for the interpolating region is given by Eq.~\eqref{eq:region2General}. The absence of shell implies the vanishing of the right hand-side of Eqs.~\eqref{Eq:S+tautau}-\eqref{Eq:S+betabeta}, which is equivalent to the vanishing of Eqs.~\eqref{Eq:K+tautau}-\eqref{Eq:K+bebe} due to \eqref{eq:SK}. The vanishing of Eq.~\eqref{Eq:K+bebe} would require $b=0$, which automatically contradicts our assumption of having a non-trivial external toroidal surface ($b>0$).

\section{Energy conditions violations}
\label{Sec:EC}

Classic theorems ensure that violations of energy-conditions need to occur outside the horizon to have a toroidal topology~\cite[Prop. 9.3.2]{Hawking1973}. On a heuristic basis, we expect that the energy-conditions need to be violated since: i) the flat spatial torus of the Geroch--Hartle line element is converted into a curved one so that it can match the flat exterior, ii) the region between the horizon and infinity interpolates between a tube and an asymptotic flat region with spheres whose area increases without a bound as we approach infinity. Both processes require violations of energy conditions, and for some choices of interpolating functions these violations can occur simultaneously in the same region, making their interpretation more difficult. This is for instance what would happen if we had a model without shells. The next two subsections are dedicated to the independent study of the violations of energy-conditions due to these two processes to gain intuition, before analyzing the full model that we have introduced.  

\subsection{Developing some intuition: infinite toroidal tube}
\label{SubSec:InfinteTorus}

To understand where and how the geometry violates the energy conditions due to the curved metric of the torus, it is instructive to look first at a simpler case. Consider the spacetime $\mathbb{R}^2\times \mathbb{T}^2$ with metric
\begin{align}
    \dd s^2 = - \dd  \tau^2 + \dd \ell^2 + \left[ \Lext + b \cos(\beta) \right]^2  \dd \alpha^2 + b^2 \dd\beta^2 \,,
\end{align}
whose spatial part is an infinite tube of toroidal sections inheriting a curved metric for every constant $(\tau, \ell)$. We are taking $\Lext> b$ again. This metric violates the NEC and consequently all the pointwise energy conditions~\cite{Kontou2020}. To prove it, it is sufficient to take the null vector field $k^a := (\partial_\tau)^a + b^{-1}(\partial_\beta)^a$ and contract it with the Einstein tensor to find
\begin{equation}
    G_{ab} k^a k^b =\frac{1}{b} \frac{\cos(\beta)}{\Lext + b\cos(\beta)}\,,
\end{equation}
which becomes negative for $\beta \in (\pi/2, 3\pi/2)$.

As an intuitive way of understanding why energy conditions are violated, we can consider a congruence of geodesic observers at constant $\alpha=\alpha_0$ and $\ell=\ell_0$, i.e., those timelike curves going around the $\beta$ direction. For instance, we can take the parametrization
\begin{equation}\label{Eq:geodesicszeta}
    (\tau, \ell_0, \alpha_0, \beta(\tau)) \qquad \text{with}\qquad \beta(\tau)= \frac{A}{b} \tau\quad \text{for}\quad A\in(0,1),
\end{equation}
whose corresponding normalized tangent vector is given by:
\begin{equation}
    \zeta^a = \frac{1}{\sqrt{1-A^2}}\left((\partial_\tau)^a + \frac{A}{b}(\partial_\beta)^a\right)\,.
\end{equation}
This congruence is twist-free and the expansion is given by:
\begin{equation}
    \theta(\tau) = - \frac{A}{\sqrt{1-A^2}} \frac{\sin(\beta(\tau))}{\Lext + b \cos(\beta(\tau))}\,.
\end{equation}
The derivative with respect to the parameter is:
\begin{equation}
    \frac{\dd \theta}{\dd \tau}  = - \frac{A^2}{\sqrt{1-A^2}} \frac{b+\Lext\cos(\beta(\tau))}{\Big[\Lext + b \cos(\beta(\tau))\Big]^2}\,,
\end{equation}
whose sign depends on the sign of the function $b+\Lext\cos(\beta(\tau))$ which is changing since $\Lext> b$. This indicates that different observers of this congruence become closer or separate, and hence violations of the SEC occur depending on the value of $\beta$, see Fig.~\ref{fig:flatcurvedtorusdeform}. 
\begin{figure}[H]
    \centering
    \includegraphics[scale=1]{./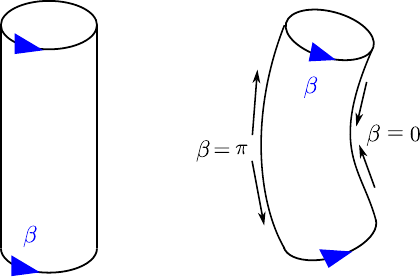}
    \caption{Pictorial representation of the geometrical operation (stretching-compressing) required to go from a flat to a curved torus.}
    \label{fig:flatcurvedtorusdeform}
\end{figure}
%

\subsection{Developing intuition: external thin shell}
\label{Subsec:ExternalThinShell}

Aside from the energy-condition violations that arise due to the presence of curved torus spatial slices, we expect also energy condition violations that arise from the matching between the toroidal tube and an asymptotically flat region. Although one could think of doing it smoothly, the simplest way of doing it is by introducing the distributional energy-momentum tensor of a thin shell. In our model, this corresponds to the external thin-shell. It is easy to understand that it must violate the NEC and, with it, all the other pointwise energy conditions. Null rays traveling parallel to the tube toward the thin shell will enter the flat spacetime where they will diverge, i.e., they will go apart from each other. 

To isolate the violations of energy-conditions arising from this matching, we can work out a simpler example before considering the energy-condition violations occurring in our full model. Consider taking the infinite torus from the previous section, but now cutting it at $\ell = \Lext$, i.e., we take the line element
\begin{align}
    \dd s^2 = - \dd  \tau^2 + \dd \ell^2 + \left[ \Lext + b \cos(\beta) \right]^2  \dd \alpha^2 + b^2 \dd\beta^2 \,,
    \label{Eq:InfiniteTubeCut}
\end{align}
for $\ell \leq \Lext$. We glue it at $\ell = \Lext$ where the induced metric reads
\begin{align}
    \dd s^2 = - \dd \tau^2 +  \left[ \Lext + b \cos(\beta) \right]^2  \dd \alpha^2 + b^2 \dd\beta^2 \,, 
\end{align}
and the external region of the asymptotically flat space from which we have removed a torus, i.e., we take the line element
\begin{align}
    \dd s^2_1 = -\dd T^2 + \dd X^2 + \dd Y^2 + \dd Z^2\,
\end{align}
for the region
\begin{align}
    \left( \sqrt{X^2 + Y^2} - \Lext \right)^2 + Z^2 \geq b^2.
\end{align}
The equality holds for the matching surface. In fact, the Darmois--Israel junction conditions have already been worked out for the matching between these geometries as the line element in Eq.~\eqref{Eq:InfiniteTubeCut} simply corresponds to the one in Eq.~\eqref{eq:region2General} upon the identification $H(\ell) = 1$ and $\mathcal{F}(\ell, \beta) = \left( \Lext + b \cos \left( \beta \right) \right)^2 $. Thus, by plugging this specific choice of functions in Eqs.~\eqref{Eq:S+tautau}-\eqref{Eq:S+betabeta} we find 
\begin{align}
    & 8 \pi S^+_{\tau\tau} = -\frac{1}{b}- \frac{\cos(\beta)}{\Lext + b \cos (\beta)} ,\\
    & 8 \pi S^+_{\alpha\alpha} =  \frac{\left(\Lext + b \cos (\beta)\right)^2}{b}, \\
    & 8 \pi S^+_{\beta\beta} = \frac{b^2 \cos(\beta)}{\Lext + b \cos (\beta)}\, . 
\end{align}
It corresponds to the energy-momentum tensor of a fluid (type I in Hawking-Ellis classification~\cite[Sec. 4.3]{Hawking1973}) on the surface with density
\begin{equation}
    \sigma = - \frac{1}{b} - \frac{\cos (\beta) }{\Lext+ b\cos (\beta)}
\end{equation}
and anisotropic pressures
\begin{align}
    p_{\alpha} & = \frac{1}{b}, \\ 
    p_{\beta} & = \frac{\cos (\beta) }{\Lext+ b\cos (\beta)}.
\end{align}
It is immediate to conclude that this energy-momentum tensor violates the NEC. To see this, recall that the NEC requires $\sigma + p_i \geq 0$ for $i = \alpha,\beta$.\footnote{Note that, for type I energy-momentum tensors according to Hawking-Ellis classification~\cite[Sec. 4.3]{Hawking1973}, this is equivalent to demanding that $T_{ab}l^a l^b\geq 0$ for every null vector $l^a$.} In our case, we find:
\begin{align}
    & \sigma + p_{\alpha} = - \frac{\cos (\beta) }{\Lext+ b\cos (\beta)}, \\
    & \sigma + p_{\beta}  = - \frac{1}{b}.
\end{align}
We trivially have $\sigma + p_{\beta} <0$ everywhere, and  $\sigma + p_{\alpha}<0$ for $\beta \in [0, \pi/2) \cup (3\pi/2,2\pi)$. A pictorial representation of the phenomenon can be found in Fig.~\ref{fig:NECcurves}.
\begin{figure}[H]
    \centering
    \includegraphics[scale=1.4]{./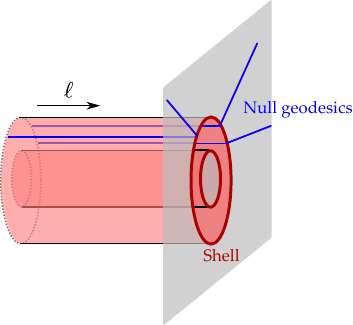}
    \caption{Pictorial representation of the light rays diverging when they cross the external shell and reach the asymptotic flat region.}
    \label{fig:NECcurves}
\end{figure}
%

\subsection{General analysis near the end of the interpolating region}
\label{Subsec:Finite}

The proof for the finite interpolating torus region is simply an adaptation of the observation from the previous subsection. In fact, given that close to $\Lext$ the solution behaves in a very similar way to the infinite torus, energy condition violations are also expected.  

We start by recalling the WEC, which requires that for any future-pointing timelike vector field $\xi^a$
\begin{equation}
    T_{ab} \xi^a \xi^b \geq 0\,. 
\end{equation}
This can be restricted to normalized timelike vector fields meaning that the energy density measured by all of them is everywhere non-negative. Similarly, the NEC says that for any future-pointing null vector field $l^a$
\begin{equation}
    T_{ab} l^a l^b \geq 0\,. 
\end{equation}
One important well-known result is that WEC implies NEC, which can be proved by assuming WEC and approaching  in a continuous way the desired null vector $l^a$ by a sequence of timelike vectors. As a consequence of this, if we find a violation of the NEC in an open set then WEC must also be violated in such a region. 

To study potential violations of these conditions we consider the following one-parameter family of vector fields:
\begin{equation}
    (A_\vpar)^a := H(\ell)^{-1/2} (\partial_\tau)^a + \vpar b^{-1} (\partial_\beta)^a\,,
\end{equation}
with $\vartheta \in [0,1]$ and define the scalar
\begin{equation}\label{eq:scalarEC}
        \mathcal{Z}_\vpar := G_{ab} (A_\vpar)^a (A_\vpar)^b = 8\pi T_{ab}\, (A_\vpar)^a (A_\vpar)^b\,,
\end{equation}
where we made use of the Einstein field equations to relate the Einstein tensor with the energy-momentum tensor.

This family of vector fields interpolates between two situations:
\begin{itemize}
    \item Case $\vpar = 0$. This corresponds to the observer following the coordinate lines of $\partial_\tau$, namely 
    \begin{equation}\label{eq:observertau}
        u^a := (A_0)^a = H(\ell)^{-1/2} (\partial_\tau)^a\,.
    \end{equation}
    Then the scalar \eqref{eq:scalarEC} is proportional to the energy density $T_{ab}u^a u^b$ measured by this observer:
    \begin{equation}
        \mathcal{Z}_0 = 8\pi T_{ab}u^a u^b = \frac{1}{H(\ell)} G_{\tau\tau} \,.
    \end{equation}
    Wherever $\mathcal{Z}<0$, the WEC is violated. We highlight at this point that $\mathcal{Z}_0\geq 0$ does not imply that WEC is satisfied, since $\mathcal{Z}_0$ only represents the energy density measured by the particular observer $u^a$, whereas the WEC requires the energy density for \emph{any} observer to be non-negative.
    
    \item Case $\vpar = 1$. This corresponds to a null vector field
    \begin{equation}
        k^a := (A_1)^a = H(\ell)^{-1/2} (\partial_\tau)^a +  b^{-1} (\partial_\beta)^a\,,
    \end{equation}
    and the scalar $\mathcal{Z}_1$ can be used to explore violations of the NEC. As in the previous case, if $\mathcal{Z}_1<0$ the NEC (and hence the WEC too) is violated. But $\mathcal{Z}_1\geq 0$ gives no information about the NEC since we are focusing on a particular null vector, $k^a$.
\end{itemize} 
In particular, for the metric \eqref{eq:region2General} we find:
\begin{align}\label{eq:Z_calFH}
     \mathcal{Z}_\vpar &= \frac{1}{4b^2}\left[\left(\frac{\partial_\beta \mathcal{F}}{\mathcal{F}}\right)^2  - 2   
    \frac{\partial^2_\beta \mathcal{F}}{\mathcal{F}} 
    \right]
    +\frac{1-\vpar^2}{4}\left[ \left(\frac{\partial_\ell \mathcal{F}}{\mathcal{F}}\right)^2 - 2  \frac{\partial^2_\ell \mathcal{F}}{\mathcal{F}}  
    \right] \nonumber\\
    &\qquad +
    \frac{\vpar^2}{4}\left[2\frac{H''}{H}-\left(\frac{H'}{H}\right)^2
    +\frac{H'}{H}\frac{\partial_\ell \mathcal{F}}{\mathcal{F}}\right]
    \,,
\end{align}
from which the limiting cases $\vartheta =0,1$ can easily be read off.

\begin{lemma}\label{lem:Z}
Consider a Lorentzian metric of the form \eqref{eq:region2General}, such that
\begin{enumerate}
  
    \item\label{hyp:smooth} $H$ and $\mathcal{F}$ are smooth functions (at least $C^2$) in $\ell\in(\Lint,\Lext)$ 
    
    \item\label{hyp:interp} and they satisfy the following conditions\footnote{
        Notice that these belong to the set of conditions \eqref{eq:calFH_interpcond} needed to have a continuous matching between the regions $\Omega_2$ and $\Omega_1$.
        }
    at $\ell = \Lext$
    \begin{align}
        \mathcal{F}(\Lext, \beta) &= (\Lext + b \cos(\beta))^2\qquad \forall \beta\in[0,2\pi)\,,\label{eq:limRext_calFx}\\
        H(\Lext) &= 1\,.
    \end{align}
    \end{enumerate}
    Then the following asymptotic behavior holds:
    \begin{align}
        \mathcal{Z}_\vpar &\sim   
        - \frac{1}{b(\Lext-b)} +\frac{1-\vpar^2}{4}\left[\left(\frac{\partial_\ell \mathcal{F}(\Lext,\pi)}{(\Lext-b)^2}\right)^2 -2 \frac{\partial^2_\ell \mathcal{F}(\Lext,\pi)}{(\Lext-b)^2}\right] \nonumber\\
        &\qquad +
        \frac{\vpar^2}{4}\left[2H''(\Lext)- H'(\Lext)^2
        +\frac{H'(\Lext) \partial_\ell \mathcal{F}(\Lext,\pi)}{(\Lext-b)^2}\right]
        \qquad  (\ell\to \Lext, \beta\to \pi)\,,\label{eq:lemma_Z}
    \end{align}
    where the suppressed terms are order 1 in $\beta-\pi$ and/or order 1 in $\ell-\Lext$.
\end{lemma}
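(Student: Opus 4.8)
The plan is to read the claimed asymptotics \eqref{eq:lemma_Z} directly off the closed form \eqref{eq:Z_calFH}, by evaluating it in the joint limit $(\ell,\beta)\to(\Lext,\pi)$ and controlling the remainder. The three brackets in \eqref{eq:Z_calFH} are built from $\mathcal{F}$, $H$ and their first and second $\ell$- and $\beta$-derivatives, always divided by positive powers of $\mathcal{F}$ or $H$. By hypothesis \ref{hyp:interp} one has $\mathcal{F}(\Lext,\pi)=(\Lext-b)^2>0$ (recall $\Lext>b$) and $H(\Lext)=1$, so both denominators stay bounded away from zero in a neighbourhood of $(\Lext,\pi)$; together with the $C^2$-smoothness of hypothesis \ref{hyp:smooth}, every coefficient in \eqref{eq:Z_calFH} is then a continuous (indeed smooth) function of $(\ell,\beta)$ there. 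It therefore suffices to evaluate each term at the point $(\Lext,\pi)$, and Taylor's theorem guarantees that the discarded pieces are of order $\ell-\Lext$ and/or $\beta-\pi$, which is exactly the stated form of the suppressed remainder; in particular the joint limit is unambiguous and insensitive to the order in which $\ell\to\Lext$ and $\beta\to\pi$ are taken.

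The only term requiring genuine computation is the first bracket, which supplies the leading $\vpar$-independent constant. First I would exploit that the boundary condition \eqref{eq:limRext_calFx} is an \emph{identity in $\beta$}: differentiating $\mathcal{F}(\Lext,\beta)=\Phi(\beta)^2$, with $\Phi(\beta)=\Lext+b\cos\beta$ from \eqref{Eq:Phidef}, twice with respect to $\beta$ yields
\[
    \partial_\beta\mathcal{F}(\Lext,\beta)=-2b\sin\beta\,\Phi(\beta),\qquad
    \partial^2_\beta\mathcal{F}(\Lext,\beta)=2b^2\sin^2\beta-2b\cos\beta\,\Phi(\beta).
\]
Evaluating at $\beta=\pi$, where $\sin\pi=0$, $\cos\pi=-1$ and $\Phi(\pi)=\Lext-b$, gives $\partial_\beta\mathcal{F}(\Lext,\pi)=0$ and $\partial^2_\beta\mathcal{F}(\Lext,\pi)=2b(\Lext-b)$. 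Inserting these together with $\mathcal{F}(\Lext,\pi)=(\Lext-b)^2$ into the first bracket of \eqref{eq:Z_calFH} produces
\[
    \frac{1}{4b^2}\left[0-2\,\frac{2b(\Lext-b)}{(\Lext-b)^2}\right]=-\frac{1}{b(\Lext-b)},
\]
which is precisely the leading term of \eqref{eq:lemma_Z}. Note that the $(\partial_\beta\mathcal{F}/\mathcal{F})^2$ piece vanishes at $(\Lext,\pi)$ and, being the square of a quantity that itself vanishes there, enters only at quadratic order in the joint deviation, so it is safely absorbed into the suppressed remainder.

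For the remaining two brackets I would simply substitute $H(\Lext)=1$ and $\mathcal{F}(\Lext,\pi)=(\Lext-b)^2$, leaving $\partial_\ell\mathcal{F}(\Lext,\pi)$, $\partial^2_\ell\mathcal{F}(\Lext,\pi)$, $H'(\Lext)$ and $H''(\Lext)$ untouched, since these $\ell$-derivatives are \emph{not} fixed by the boundary data \eqref{eq:limRext_calFx} and genuinely encode the freedom in the interpolating functions; this reproduces the $(1-\vpar^2)$- and $\vpar^2$-brackets of \eqref{eq:lemma_Z} verbatim. I do not anticipate a serious obstacle: the argument is an evaluation at a single point plus a Taylor remainder estimate. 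The only point demanding care is bookkeeping---keeping the $\beta$-derivatives (determined by the boundary identity) rigorously separate from the free $\ell$-derivatives, and confirming through the positivity $\mathcal{F}(\Lext,\pi)>0$ and the smoothness hypothesis that no quotient in \eqref{eq:Z_calFH} develops a singularity that could upset the claimed order of the error.
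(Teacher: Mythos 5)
Your proposal is correct and follows essentially the same route as the paper's own proof: evaluate the closed form \eqref{eq:Z_calFH} at $(\ell,\beta)=(\Lext,\pi)$, using the $C^2$ hypothesis to justify a Taylor remainder of order $\ell-\Lext$ and/or $\beta-\pi$, and fix the $\beta$-derivatives by differentiating the boundary identity \eqref{eq:limRext_calFx} to get $\partial_\beta\mathcal{F}(\Lext,\pi)=0$ and $\partial^2_\beta\mathcal{F}(\Lext,\pi)=2b(\Lext-b)$, exactly as in the paper. Your explicit checks that the denominators $\mathcal{F}$ and $H$ stay bounded away from zero near $(\Lext,\pi)$, and that the $\bigl(\partial_\beta\mathcal{F}/\mathcal{F}\bigr)^2$ piece is absorbed into the remainder, are sound refinements of bookkeeping the paper leaves implicit.
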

\begin{proof}
    From \eqref{eq:Z_calFH}, and using that the involved functions are at least $C^2$ (Hypothesis~\ref{hyp:smooth}) one can directly obtain the leading order behavior:
    \begin{align}
        \mathcal{Z}_\vpar &\sim   
        \frac{1}{4b^2}\left[\left(\frac{\partial_\beta \mathcal{F}(\Lext,\pi)}{(\Lext-b)^2}\right)^2 
         - 2\frac{\partial^2_\beta \mathcal{F}(\Lext,\pi)}{ (\Lext-b)^2}\right] \nonumber \\
         &\qquad +\frac{1-\vpar^2}{4}\left[\left(\frac{\partial_\ell \mathcal{F}(\Lext,\pi)}{(\Lext-b)^2}\right)^2 -2 \frac{\partial^2_\ell \mathcal{F}(\Lext,\pi)}{(\Lext-b)^2}\right] \nonumber\\
        &\qquad +
        \frac{\vpar^2}{4}\left[2H''(\Lext)- H'(\Lext)^2
        +\frac{H'(\Lext) \partial_\ell \mathcal{F}(\Lext,\pi)}{(\Lext-b)^2}\right]
        \qquad  (\ell\to \Lext, \beta\to \pi)\,,
    \end{align}
    where we recall that we are abbreviating $\partial^k_i \mathcal{F}(\Lext,\pi) := \lim_{\ell \nearrow \Lext}\lim_{\beta \to \pi} \partial^k_i \mathcal{F}(\ell,\beta)$ and the suppressed terms are order 1 in $\beta-\pi$ and/or order 1 in $\ell-\Lext$. To derive this expression we just made use of the interpolation conditions in Hypothesis~\ref{hyp:interp}.
    
    Furthermore, the derivatives with respect to $\beta$ should tend to the derivatives of the limiting function \eqref{eq:limRext_calFx} as a consequence of Hypothesis~\ref{hyp:smooth}, namely
     \begin{equation}
    \partial^k_\beta \mathcal{F}(\Lext,\pi) =  \lim_{\beta \to \pi} \partial^k_\beta \left(\lim_{\ell \nearrow \Lext} \mathcal{F}(\ell, \beta)\right) 
    = \lim_{\beta \to \pi} \partial^k_\beta (\Lext + b \cos(\beta))^2\,,
    \end{equation}
    at least for $k=1,2$. The results are:    
    \begin{equation}
        \partial_\beta \mathcal{F}(\Lext,\pi) = 0\,,\qquad
        \partial^2_\beta \mathcal{F}(\Lext,\pi)  =2b(\Lext - b)\,.
    \end{equation}
    After substituting this into the previous asymptotic expansion for $\mathcal{Z}_\vpar$ we reach the expression that we wanted to prove.
\end{proof}

If we impose an additional condition on the partial derivatives with respect to $\ell$ we can find the following stronger result:
\begin{thm}\label{thm:WECNECviol}
Consider any Lorentzian metric of the form \eqref{eq:region2General} satisfying the hypotheses of Lemma~\ref{lem:Z} together with
\begin{enumerate}
    
    \item the following condition
    as $\ell \to \Lext$
    \begin{equation}\label{eq:limRext_dlcalF}
        \partial_\ell\mathcal{F}(\Lext, \beta) = 
\partial^2_\ell \mathcal{F}(\Lext, \beta)=0\qquad\forall \beta\in[0,2\pi)
    \end{equation}

    \item and $\Lext> b$ (non-degenerate torus).
    
\end{enumerate}
Then, in some open set sufficiently close to $\ell=\Lext$ and $\beta=\pi$:
\begin{enumerate}
    \item[(i)] the WEC is violated;
    \item[(ii)] additionally, if 
    \begin{equation}\label{eq:limRext_H}
        H'(\Lext) = H''(\Lext) = 0\,,
    \end{equation}
    the NEC is violated.
\end{enumerate}
\end{thm}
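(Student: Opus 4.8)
The plan is to feed the two new hypotheses straight into the asymptotic formula \eqref{eq:lemma_Z} supplied by Lemma~\ref{lem:Z}, and then use continuity to upgrade a strictly negative limiting value into a strictly negative sign on a whole open set. First I would substitute condition \eqref{eq:limRext_dlcalF}, namely $\partial_\ell\mathcal{F}(\Lext,\pi)=\partial^2_\ell\mathcal{F}(\Lext,\pi)=0$, into \eqref{eq:lemma_Z}. This annihilates the entire $(1-\vpar^2)$ bracket as well as the mixed $H'\,\partial_\ell\mathcal{F}$ term, collapsing the leading behavior to
\begin{equation}
    \mathcal{Z}_\vpar \sim -\frac{1}{b(\Lext-b)} + \frac{\vpar^2}{4}\left[2H''(\Lext)-H'(\Lext)^2\right] \qquad (\ell\to\Lext,\ \beta\to\pi)\,.
\end{equation}
The non-degeneracy hypothesis $\Lext>b$ guarantees $b(\Lext-b)>0$, so the $\vpar$-independent piece is a strictly negative constant.

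For part (i) I would set $\vpar=0$, for which the $H$-dependent bracket drops out entirely and the limit reduces to $\mathcal{Z}_0\sim -1/[b(\Lext-b)]<0$. Since $\mathcal{Z}_0=8\pi T_{ab}u^au^b$ is the energy density measured by the timelike observer $u^a=H^{-1/2}(\partial_\tau)^a$ of \eqref{eq:observertau}, a negative value directly signals a WEC violation. Because $H$ and $\mathcal{F}$ are $C^2$ (Hypothesis~\ref{hyp:smooth}), the scalar $\mathcal{Z}_0(\ell,\beta)$ is continuous and extends continuously to the corner $(\Lext,\pi)$ with the above negative boundary value; by sign-preservation under continuity there is a relative neighborhood of $(\Lext,\pi)$, intersected with $\ell<\Lext$, on which $\mathcal{Z}_0<0$, i.e. an open subset of the interpolating region where the WEC fails.

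For part (ii) I would instead set $\vpar=1$, giving the null vector $k^a=H^{-1/2}(\partial_\tau)^a+b^{-1}(\partial_\beta)^a$. Now the surviving $H$-bracket carries weight $\vpar^2=1$, so without further input $\mathcal{Z}_1\sim -1/[b(\Lext-b)]+\frac{1}{4}[2H''(\Lext)-H'(\Lext)^2]$ could have either sign — this is precisely why the extra condition \eqref{eq:limRext_H} is imposed. Inserting $H'(\Lext)=H''(\Lext)=0$ kills the bracket and leaves $\mathcal{Z}_1\sim -1/[b(\Lext-b)]<0$. Since $\mathcal{Z}_1=8\pi T_{ab}k^ak^b$, the same continuity argument produces an open set near $(\Lext,\pi)$ where $T_{ab}k^ak^b<0$, i.e. where the NEC (and hence, since WEC implies NEC, the WEC as well) is violated.

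The computation is otherwise routine; the only step demanding a little care is the passage from the pointwise asymptotic limit in Lemma~\ref{lem:Z} to an open-set statement, which rests on the elementary fact that a continuous function with a strictly negative limiting value stays negative on a neighborhood. What makes this painless is that the obstructing positive contributions are exactly the $\partial_\ell\mathcal{F}$- and $H$-derivative terms, so the hypotheses \eqref{eq:limRext_dlcalF} and \eqref{eq:limRext_H} are tailored to remove them, leaving the manifestly negative geometric term $-1/[b(\Lext-b)]$ — whose negativity is secured entirely by the non-degeneracy $\Lext>b$ — to dominate.
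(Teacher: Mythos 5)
Your proof is correct and follows essentially the same route as the paper's: you substitute \eqref{eq:limRext_dlcalF} (and, for part (ii), \eqref{eq:limRext_H}) into the asymptotic expansion \eqref{eq:lemma_Z} from Lemma~\ref{lem:Z}, so that only the strictly negative term $-1/\big[b(\Lext-b)\big]$ survives, its sign being secured by the non-degeneracy condition $\Lext>b$. The only difference is that you make explicit the continuity step promoting the negative limiting value at $(\Lext,\pi)$ to a strictly negative sign on an open set, which the paper leaves implicit.
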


\begin{proof}
Since we are under the hypothesis of Lemma~\ref{lem:Z}, we can make use of the asymptotic expansion \eqref{eq:lemma_Z}. After using Eq.~\eqref{eq:limRext_dlcalF}, we find
\begin{equation}\label{eq:asymp_FG}
    \mathcal{Z}_\vpar \sim   - \frac{1}{b(\Lext-b)} + \frac{\vpar^2}{4}(2H''(\Lext)-H'(\Lext)^2) \qquad  (\ell\to \Lext, \beta\to \pi)\,.
\end{equation}
In the case $\vpar=0$ we get
\begin{equation}\label{eq:auxthm}
    \mathcal{Z}_0 \sim   - \frac{1}{b(\Lext-b)}\qquad  (\ell\to \Lext, \beta\to \pi)\,,
\end{equation}
which is negative since we are assuming $\Lext>b$. This implies that the WEC is violated, as expected based on the intuition developed with the infinite torus example. Finally, in the case $\vpar=1$ together with \eqref{eq:limRext_H} we get exactly the same leading order term appearing in Eq.~\eqref{eq:auxthm}, so the NEC is violated in the asymptotic region.
\end{proof}

By adapting this result to the particular case \eqref{eq:FG} we can derive the corollary:
\begin{cor} \label{cor:WECNECviol}
    For any Lorentzian metric of the form \eqref{eq:gregion2} with 
\begin{enumerate}

    \item $\Lext> b$ (non-degenerate torus),
    
    \item $H$, $F$ and $G$ smooth functions (at least $C^2$) in $\ell\in(\Lint,\Lext),$
    
    \item $G(\Lext)=0$ and $F(\Lext)=H(\Lext)=1$,

    \item and $F'(\Lext)= F'' (\Lext)= G'(\Lext)= G''(\Lext)=0$,
\end{enumerate}
in some open set sufficiently close to $\ell=\Lext$ and $\beta=\pi$:
\begin{enumerate}
    \item[(i)] the WEC is violated;
    \item[(ii)] additionally, if $H'(\Lext) = H''(\Lext) = 0$, the NEC is violated.
\end{enumerate}
\end{cor}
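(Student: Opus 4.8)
The plan is to recognize that Corollary~\ref{cor:WECNECviol} is simply the specialization of Theorem~\ref{thm:WECNECviol} to the factorized profile \eqref{eq:FG}. Accordingly, the whole proof reduces to checking that, under the substitution $\mathcal{F}(\ell,\beta) = F(\ell)\left(\Lext + b\cos(\beta)\right)^2 + G(\ell)$, the four hypotheses of the corollary imply every hypothesis demanded by Theorem~\ref{thm:WECNECviol} (and, transitively, by Lemma~\ref{lem:Z}). Once this is done, conclusions (i) and (ii) transfer verbatim from the theorem, with the extra NEC requirement $H'(\Lext)=H''(\Lext)=0$ in part (ii) being carried over unchanged since it is phrased identically as \eqref{eq:limRext_H}.

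First I would dispatch the smoothness and interpolation requirements. Since $F$, $G$, $H$ are $C^2$ on $(\Lint,\Lext)$ by hypothesis~2 of the corollary, and $\left(\Lext + b\cos(\beta)\right)^2$ is smooth in $\beta$, the combination \eqref{eq:FG} is manifestly $C^2$ in $\ell$ and smooth in $\beta$, which gives Hypothesis~\ref{hyp:smooth} of Lemma~\ref{lem:Z}. For the boundary condition \eqref{eq:limRext_calFx}, I would evaluate \eqref{eq:FG} at $\ell=\Lext$ and use hypothesis~3 ($F(\Lext)=1$, $G(\Lext)=0$) to obtain $\mathcal{F}(\Lext,\beta) = \left(\Lext + b\cos(\beta)\right)^2$ for every $\beta$, while $H(\Lext)=1$ is assumed directly, completing Hypothesis~\ref{hyp:interp}. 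Next I would verify the additional derivative condition \eqref{eq:limRext_dlcalF}: differentiating \eqref{eq:FG} twice with respect to $\ell$ gives $\partial_\ell\mathcal{F}(\ell,\beta) = F'(\ell)\left(\Lext + b\cos(\beta)\right)^2 + G'(\ell)$ and $\partial^2_\ell\mathcal{F}(\ell,\beta) = F''(\ell)\left(\Lext + b\cos(\beta)\right)^2 + G''(\ell)$, so imposing hypothesis~4 ($F'(\Lext)=F''(\Lext)=G'(\Lext)=G''(\Lext)=0$) forces both $\partial_\ell\mathcal{F}(\Lext,\beta)$ and $\partial^2_\ell\mathcal{F}(\Lext,\beta)$ to vanish. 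Together with $\Lext>b$ from hypothesis~1, all the premises of Theorem~\ref{thm:WECNECviol} are then in place, and its conclusions yield the result.

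There is no substantive obstacle here: this is a bookkeeping corollary and the verification is routine. The only point deserving a line of care is the \emph{uniformity in $\beta$} of the derivative conditions. One must note that the $\beta$-dependence in $\partial_\ell\mathcal{F}$ and $\partial^2_\ell\mathcal{F}$ enters solely through the common factor $\left(\Lext + b\cos(\beta)\right)^2$, so that the vanishing of the $F$ and $G$ derivatives at $\Lext$ kills these quantities simultaneously for \emph{all} $\beta$, and not merely at the relevant value $\beta=\pi$ used in the asymptotic expansion \eqref{eq:lemma_Z}. This confirms that hypothesis~4 is exactly the condition needed to reproduce \eqref{eq:limRext_dlcalF}.
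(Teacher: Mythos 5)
Your proof is correct and takes exactly the route the paper intends: the paper presents Corollary~\ref{cor:WECNECviol} as an immediate specialization of Theorem~\ref{thm:WECNECviol} to the ansatz \eqref{eq:FG} without spelling out the verification, and your check that hypotheses 1--4 reproduce \eqref{eq:limRext_calFx} and \eqref{eq:limRext_dlcalF} (uniformly in $\beta$, via the common factor $\left(\Lext + b\cos(\beta)\right)^2$) supplies precisely that omitted bookkeeping.
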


\subsection{Some specific models}
\label{Subsec:specific}

For simplicity, we now focus on the lowest-degree polynomial ansatz of the type \eqref{eq:FG} that satisfies the interpolation conditions  \eqref{eq:FGH_interpcond}, the no-internal-shell condition \eqref{eq:no-shell} and the hypothesis of Theorem~\ref{thm:WECNECviol} (equivalently, those of the Corollary~\ref{cor:WECNECviol}). This can be obtained by Hermite interpolation and the result reads:
 \begin{align}
     F(\ell) &= \frac{(\ell-\Lint)^2\big[3\ell^2-2\ell(4\Lext-\Lint)+6\Lext^2-4\Lext\Lint+\Lint^2\big]}{(\Lext-\Lint)^4}\,,\nonumber\\
     G(\ell) &= \dfrac{m^2}{\pi^2} \frac{(\Lext-\ell)^3(3\ell+\Lext-4\Lint)}{(\Lext-\Lint)^4}\,,\nonumber\\
     H(\ell)&=\frac{2\ell(\Lext-\ell)^3+2\ell\Lint(\ell^2-3\ell\Lext+3\Lext^2)-\Lint(\Lext^3+3\Lext^2\Lint-3\Lext\Lint^2+\Lint^3)}{\Lint(\Lext-\Lint)^3}\,,\label{eq:greatFGH}
\end{align}
which are strictly positive for $\ell\in(\Lint,\Lext)$.

In the following, we analyze violations of the energy conditions throughout the entire $\Omega_2$ region for this specific choice. In Fig.~\ref{fig:Z0FH0}, we represent the function $\mathcal{Z}_0$ (essentially, the energy density seen by observers in the orbits of the vector in Eq.~\eqref{eq:observertau}) as a function of the coordinates $\ell$ and $\beta$ of the interpolating region  for a particular set of parameters. The contour plot clearly shows that the negative energy region enclosed by the surface of vanishing energy density (the dashed line) will be reached by any observer sufficiently close to $\ell = \Lext$ and $\beta = \pi$. 

We also present, in Fig.~\ref{fig:Z1FH0}, the scalar $\mathcal{Z}_1$ as a function of $\ell$ and $\beta$ for the same particular set of parameters. The dark regions represent those in which the NEC is violated.

\begin{figure}[H]
    \centering
    \includegraphics[scale=0.48]{./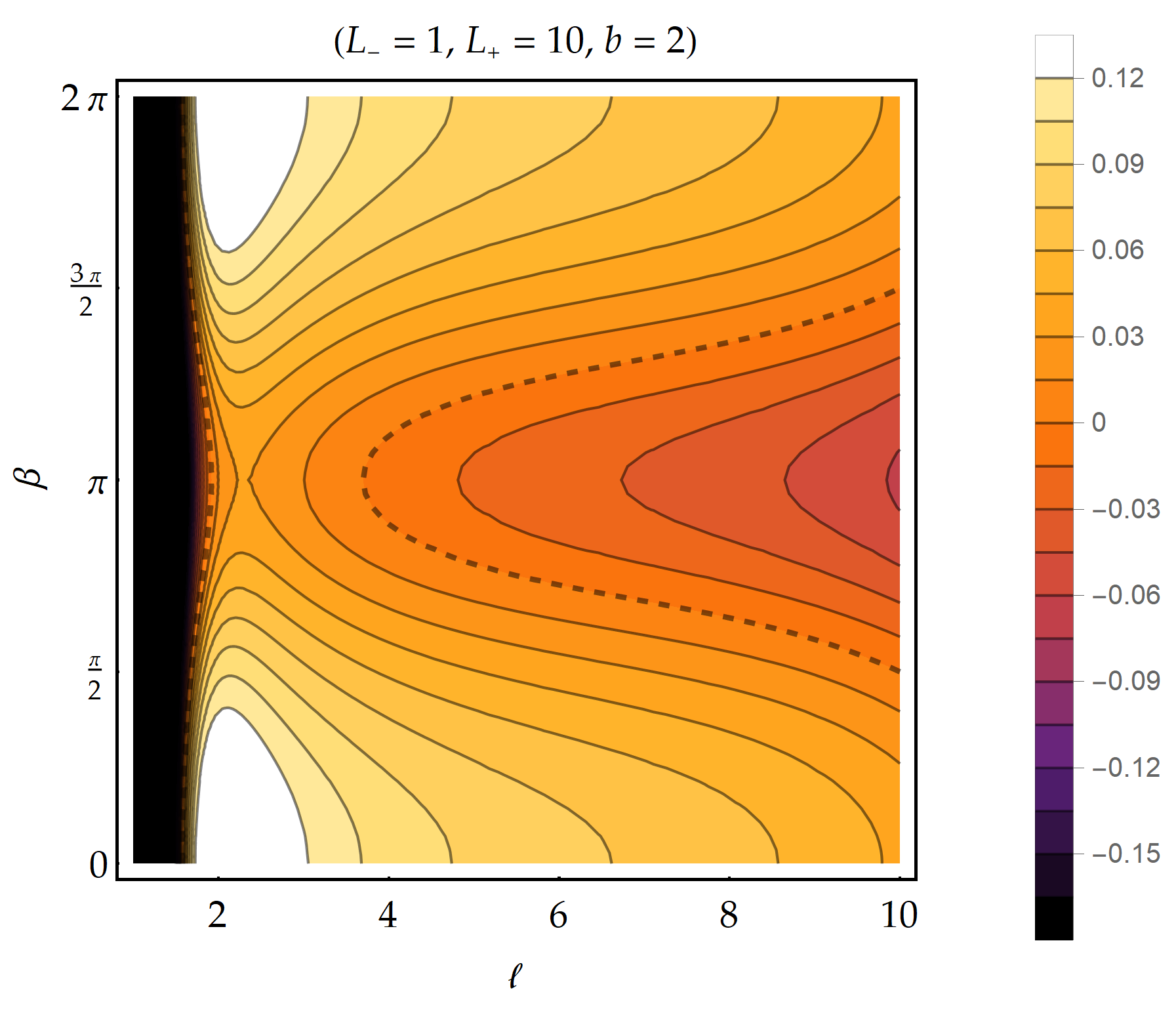}
    \includegraphics[scale=0.48]{./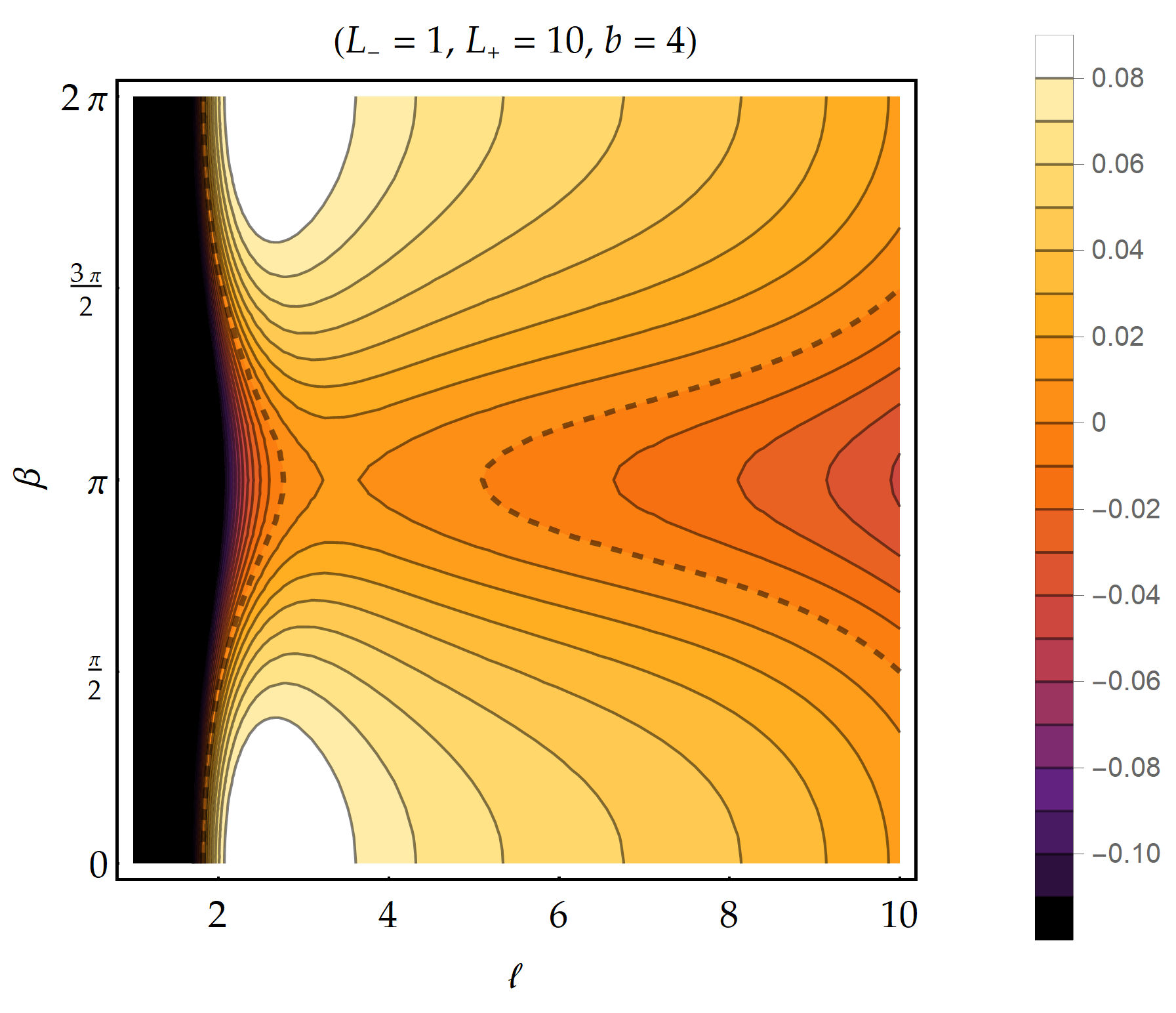}
    \includegraphics[scale=0.48]{./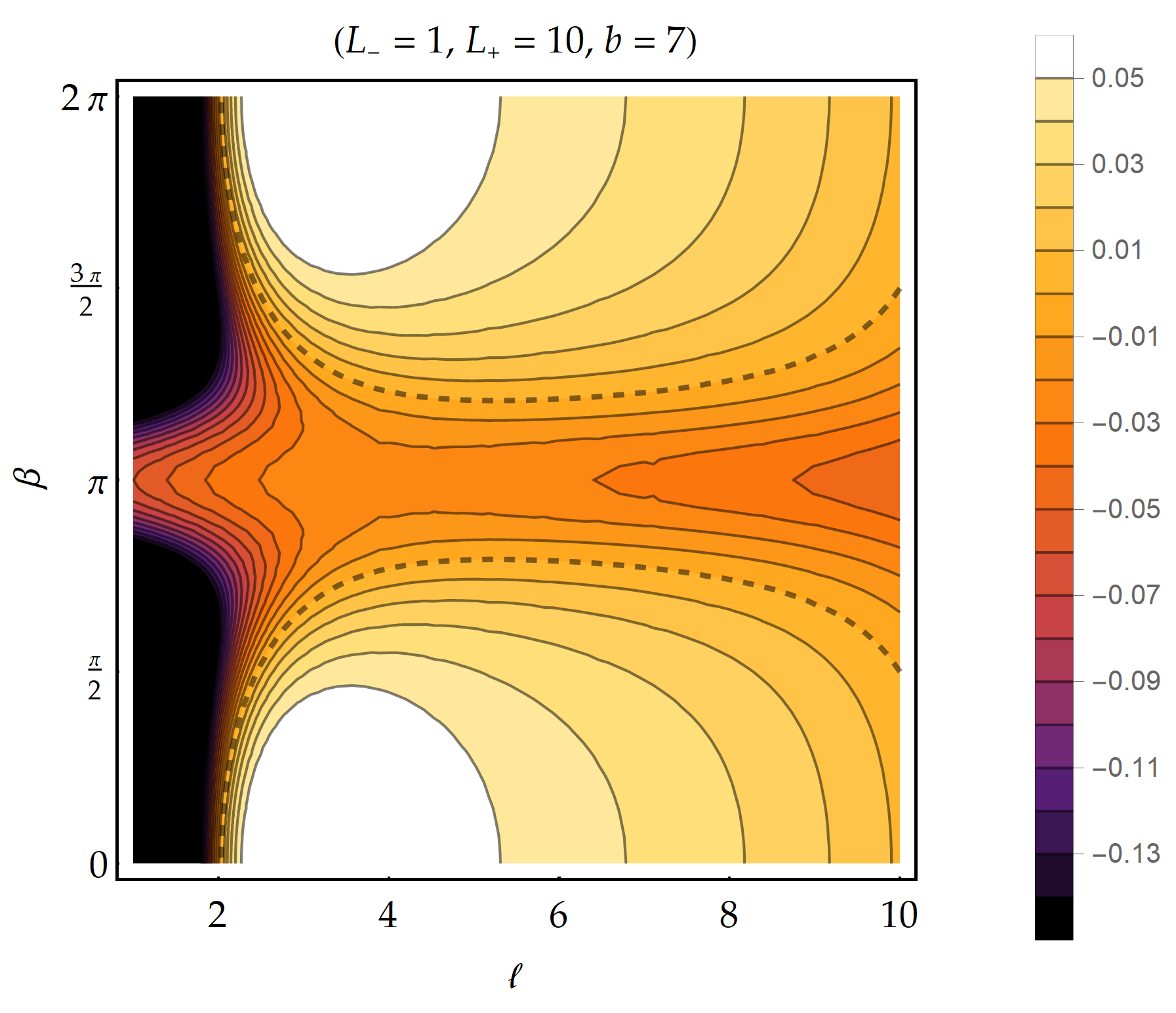}
    \includegraphics[scale=0.48]{./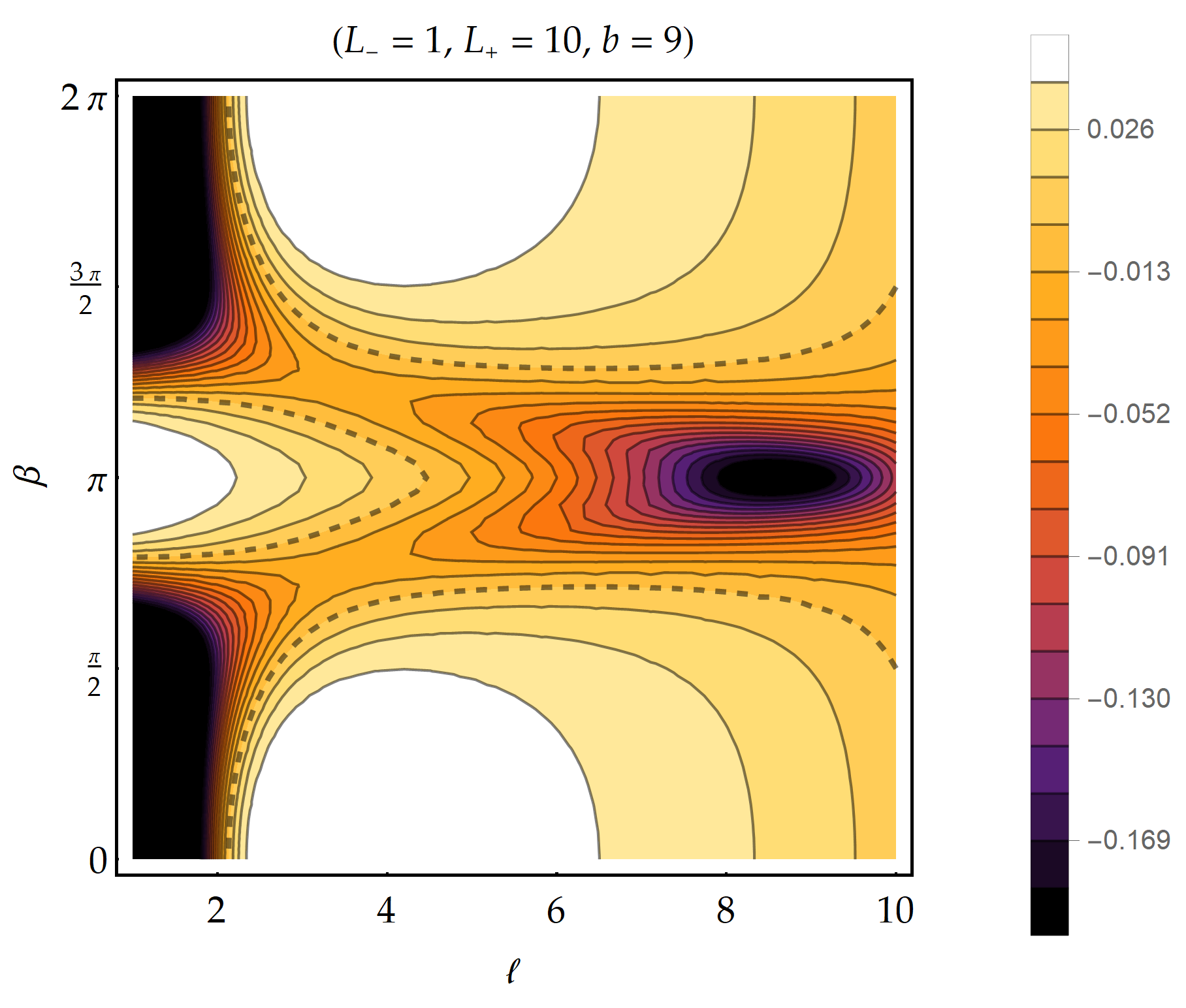}
    \caption{Contour plot for the values of $\mathcal{Z}_0(\beta, \ell)$ within the interpolating region $\Omega_2$ for the ansatz \eqref{eq:greatFGH} and for different values of $b$. The dashed line represents the limit between positive and negative energy densities ($\mathcal{Z}_0=0$). For the plots, we took $\Lint=1$ $\Lext=10$.}
    \label{fig:Z0FH0}
\end{figure}

\begin{figure}[H]
    \centering
    \includegraphics[scale=0.48]{./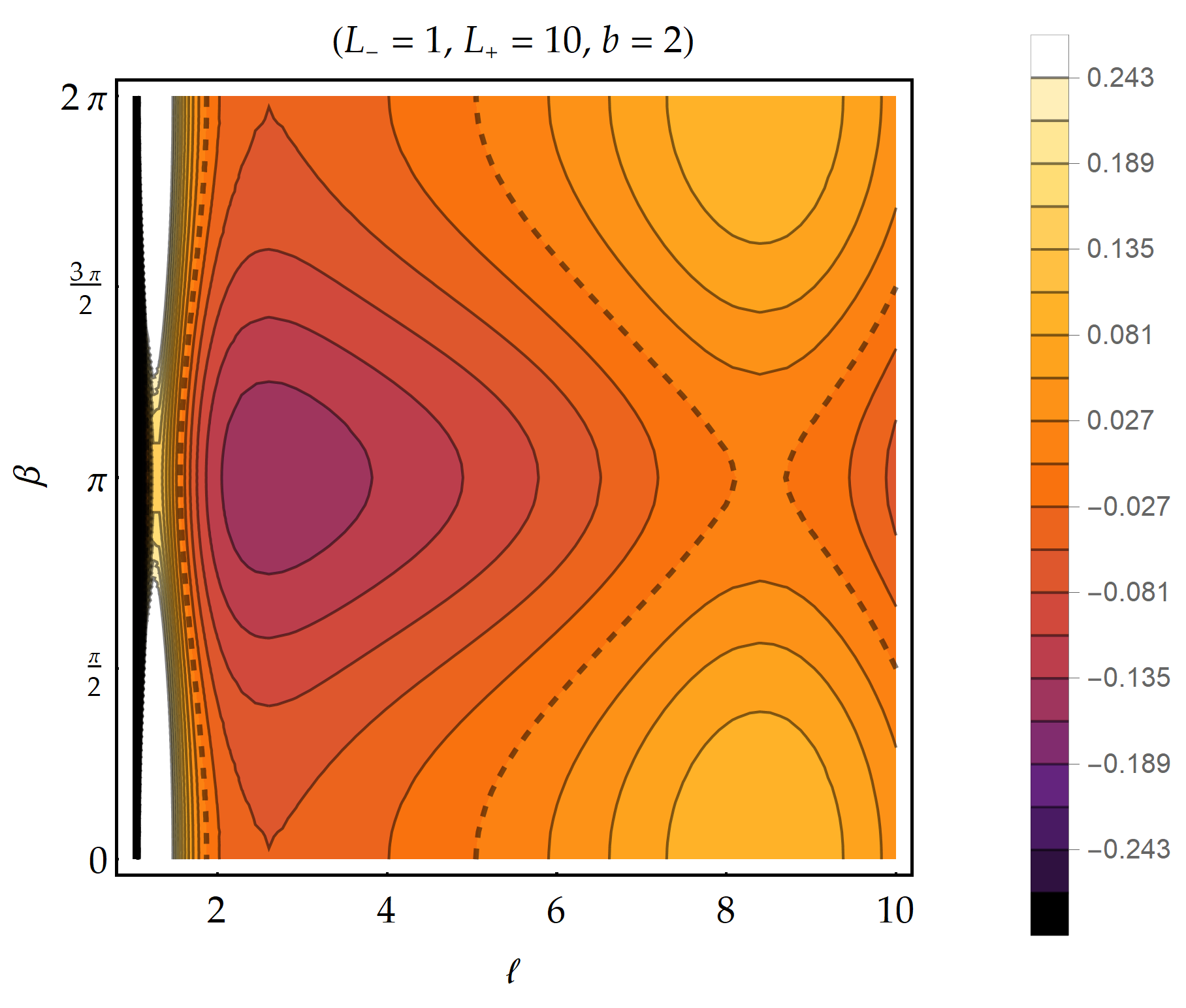}
    \includegraphics[scale=0.48]{./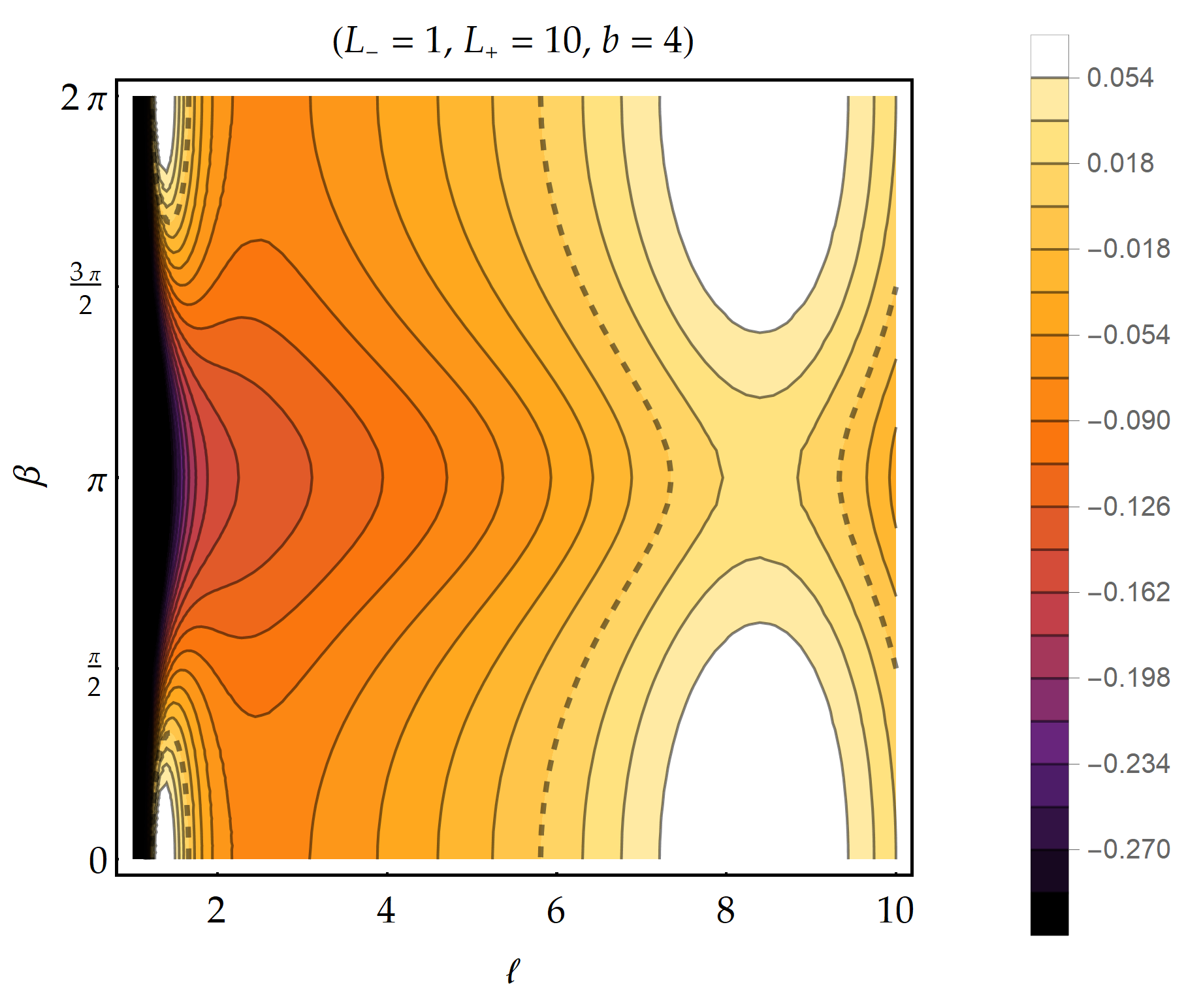}
    \includegraphics[scale=0.48]{./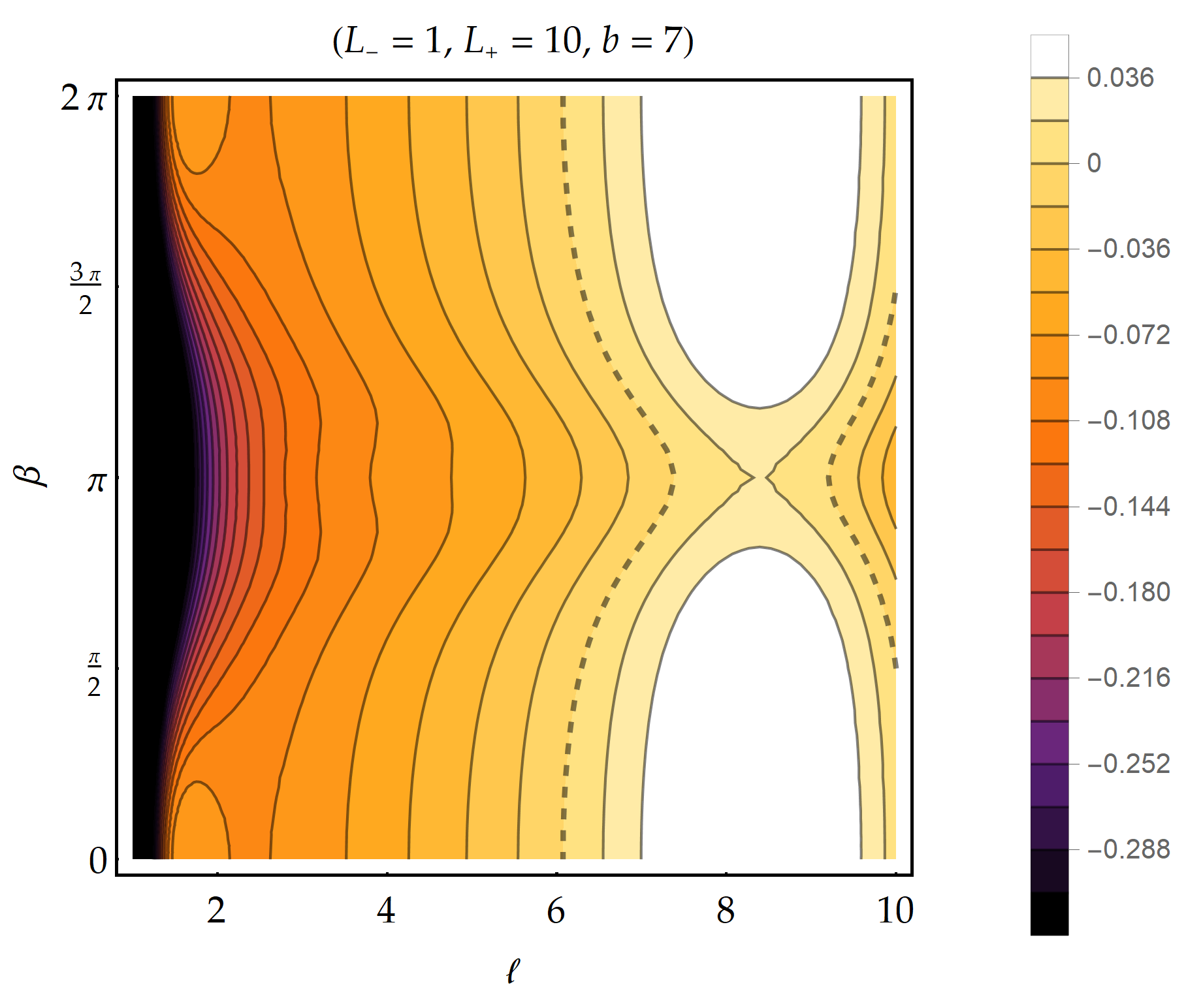}
    \includegraphics[scale=0.48]{./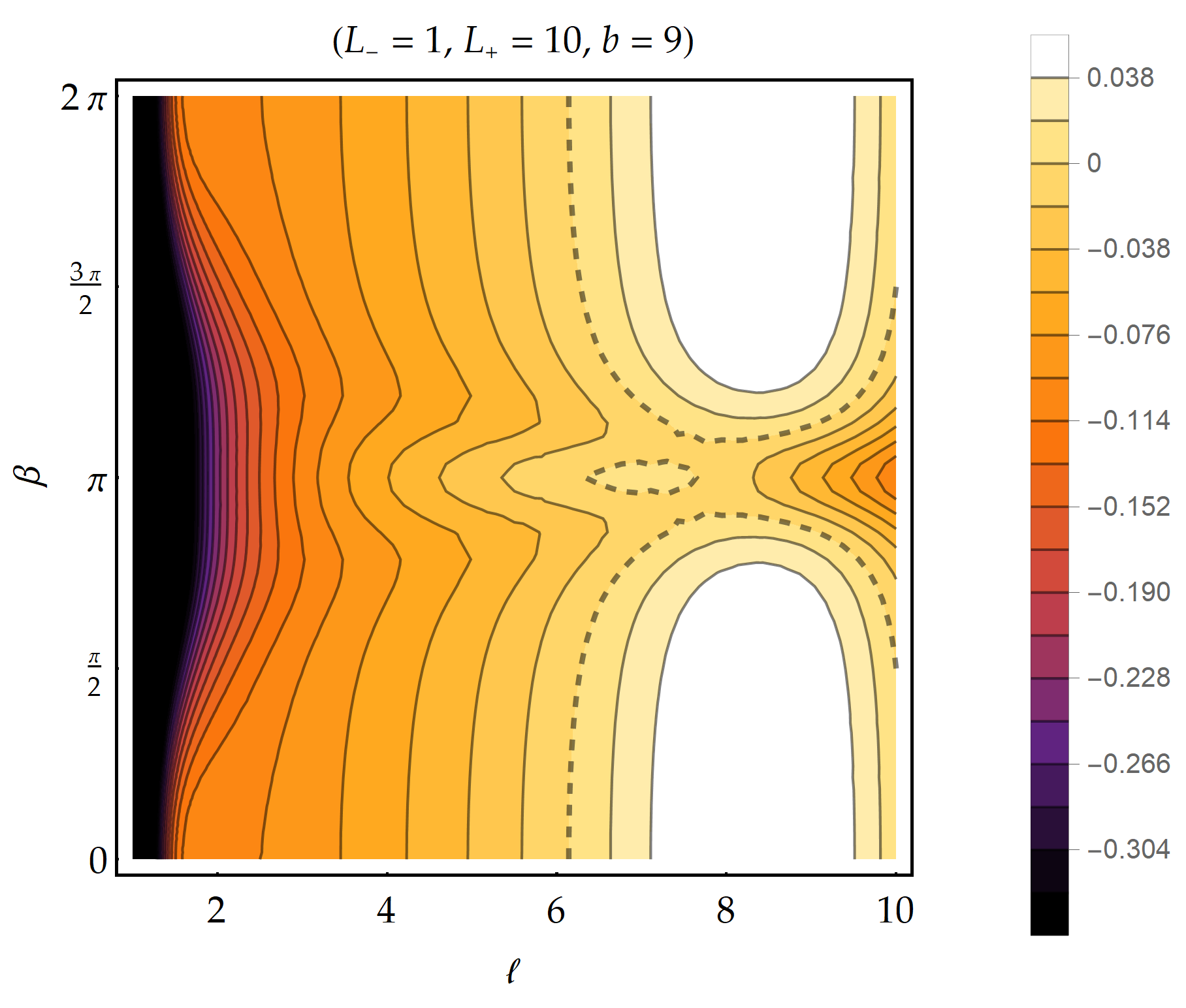}
    \caption{Contour plot for the values of $\mathcal{Z}_1(\beta, \ell)$ within the interpolating region $\Omega_2$ for the ansatz \eqref{eq:greatFGH} and for different values of $b$. The dashed line represents the limit between positive and negative values ($\mathcal{Z}_1=0$). For the plots, we took $\Lint=1$ $\Lext=10$.}
    \label{fig:Z1FH0}
\end{figure}

\section{Conclusions}
\label{Sec:Conclusions}

In this work we have provided a comprehensive overview of toroidal black holes in 4D. We have revisited the assumptions  behind the classic theorems that forbid their existence with a specific emphasis on the energy condition violations. We have also reviewed previous attempts to build toroidal black holes in 4D and how and why such attempts failed in providing a regular external region. 

We have presented the simplest construction of a toroidal black hole, achieved by matching three regions: an external flat spacetime, an inner region described by the simplest locally vacuum toroidal black hole (a Rindler line element with periodically identified transverse directions), and an intermediate region containing a nontrivial matter profile that bridges the two. We analyzed the properties of the matter content in the intermediate region, which consists of an anisotropic fluid and two shells located at the boundaries where the intermediate region joins the inner and outer regions. While a specific choice of the interpolating matter profile can eliminate the internal shell, the external shell remains unavoidable within the family of geometries that we consider. Although it might be possible to consider alternative interpolations for which it is possible to avoid the external thin-shell, we have decided to stick to the particular one considered in this paper for its easier interpretation. Additionally, we explained why a simpler construction directly matching the inner toroidal black hole to the external Minkowski spacetime via a single shell fails: the shell's induced metric would be curved on one side and flat on the other, violating the first of Darmois-Israel junction conditions, which requires the continuity of the first fundamental form across the shell.

We have explicitly studied the violations of energy conditions and confirmed that the DEC is violated, in agreement with the implications of Hawking's theorem. In particular, we have proved the violation of the WEC and NEC near and on the external shell as approached from the interpolating region and inspected the properties of the fluid in the intermediate region for some specific choices of the interpolating functions.

There are several promising directions for future research building on this work. First, the assumption of a flat external region is a significant simplification, and it would be interesting to replace it with a spacetime that is only asymptotically flat, such as the Schwarzschild or Curzon solution. In particular, it would be interesting to analyze the line element used in~\cite{Kleihaus2019,Cunha2024} to understand whether it can represent toroidal horizons that are locally in vacuum and whether our models could be brought to that form. Additionally, the analysis could be extended to the more general case of stationary and axisymmetric configurations. Given the integrability of the Ernst equation, addressing this problem in such a setup seems feasible. However, it would require a comprehensive characterization of local stationary and axisymmetric vacuum black holes, analogous to the Geroch-Hartle framework for the static case. Finally, the results discussed here are restricted to asymptotically flat scenarios, and it remains an open question whether energy condition violations are still necessary to support these configurations in alternative contexts, such as black holes embedded in cosmological spacetimes.

\acknowledgments{
The authors would like to thank Luis J. Garay, Matt Visser and Bert Janssen for valuable discussion and Jos\'e M M Senovilla for helpful correspondence. Financial support was provided by the Spanish Government through the Grants No. PID2020-118159GB-C43, PID2020-118159GB-C44, PID2023-149018NB-C43 and PID2023-149018NB-C44 (funded by \mbox{MCIN/AEI/10.13039/501100011033),} and by the Junta de Andaluc\'ia through the project FQM219. GGM is funded by the Spanish Government fellowship FPU20/01684. CB and GGM acknowledge financial support from the Severo Ochoa grant CEX2021-001131-S funded by MCIN/AEI/10.13039/501100011033. The work of AJC has been supported by the grant PID2022-140831NB-I00 funded by MICIU/AEI/10.13039/ 501100011033. The computations have been checked with xAct \cite{xAct}, a Mathematica package for tensorial symbolic calculus. The corresponding notebook is available upon request.
}

\bibliographystyle{JHEP}
\bibliography{torus_biblio}

\providecommand{\href}[2]{#2}\begingroup\raggedright\begin{thebibliography}{10}

\bibitem{Chrusciel2012}
P.~T. Chrusciel, J.~Lopes~Costa and M.~Heusler, \emph{{Stationary Black Holes:
  Uniqueness and Beyond}},
  \href{http://dx.doi.org/10.12942/lrr-2012-7}{\emph{Living Rev. Rel.}
  {\bfseries 15} (2012) 7}, [\href{https://arxiv.org/abs/1205.6112}{{\ttfamily
  1205.6112}}].

\bibitem{Barcelo2024}
C.~Barcel\'o, R.~Carballo-Rubio, L.~J. Garay and G.~Garc\'\i{}a-Moreno,
  \emph{{No-hair and almost-no-hair results for static axisymmetric black holes
  and ultracompact objects in astrophysical environments}},
  \href{http://dx.doi.org/10.1088/1361-6382/adc233}{\emph{Class. Quant. Grav.}
  {\bfseries 42} (2025) 075020},
  [\href{https://arxiv.org/abs/2410.08128}{{\ttfamily 2410.08128}}].

\bibitem{Gurlebeck2015}
N.~G\"urlebeck, \emph{{No-hair theorem for Black Holes in Astrophysical
  Environments}},
  \href{http://dx.doi.org/10.1103/PhysRevLett.114.151102}{\emph{Phys. Rev.
  Lett.} {\bfseries 114} (2015) 151102},
  [\href{https://arxiv.org/abs/1503.03240}{{\ttfamily 1503.03240}}].

\bibitem{Hawking1973}
S.~W. Hawking and G.~F.~R. Ellis, \emph{{The Large Scale Structure of
  Space-Time}}.
\newblock Cambridge Monographs on Mathematical Physics. Cambridge University
  Press, 2, 2023,
  \href{http://dx.doi.org/10.1017/9781009253161}{10.1017/9781009253161}.

\bibitem{Geroch1982}
R.~P. Geroch and J.~B. Hartle, \emph{{Distorted black holes}},
  \href{http://dx.doi.org/10.1063/1.525384}{\emph{J. Math. Phys.} {\bfseries
  23} (1982) 680}.

\bibitem{Xanthopoulos1983}
B.~C. {Xanthopoulos}, \emph{{Local Toroidal Black Holes That are Static and
  Axisymmetric}},
  \href{http://dx.doi.org/10.1098/rspa.1983.0075}{\emph{Proceedings of the
  Royal Society of London Series A} {\bfseries 388} (July, 1983) 117--131}.

\bibitem{Thorne1975}
K.~S. Thorne, \emph{{A Toroidal Solution of the Vacuum Einstein Field
  Equations}}, \href{http://dx.doi.org/10.1063/1.522763}{\emph{J. Math. Phys.}
  {\bfseries 16} (1975) 1860--1865}.

\bibitem{Peters1979}
P.~C. Peters, \emph{{Toroidal black holes?}},
  \href{http://dx.doi.org/10.1063/1.524204}{\emph{Journal of Mathematical
  Physics} {\bfseries 20} (07, 2008) 1481--1485}.

\bibitem{Senovilla2014}
J.~M.~M. Senovilla, \emph{{Black hole formation by incoming electromagnetic
  radiation}},
  \href{http://dx.doi.org/10.1088/0264-9381/32/1/017001}{\emph{Class. Quant.
  Grav.} {\bfseries 32} (2015) 017001},
  [\href{https://arxiv.org/abs/1408.2778}{{\ttfamily 1408.2778}}].

\bibitem{Kleihaus2019}
B.~Kleihaus, J.~Kunz and E.~Radu, \emph{{Balancing a static black ring with a
  phantom scalar field}},
  \href{http://dx.doi.org/10.1016/j.physletb.2019.134892}{\emph{Phys. Lett. B}
  {\bfseries 797} (2019) 134892},
  [\href{https://arxiv.org/abs/1906.06372}{{\ttfamily 1906.06372}}].

\bibitem{Emparan2001}
R.~Emparan and H.~S. Reall, \emph{{Generalized Weyl solutions}},
  \href{http://dx.doi.org/10.1103/PhysRevD.65.084025}{\emph{Phys. Rev. D}
  {\bfseries 65} (2002) 084025},
  [\href{https://arxiv.org/abs/hep-th/0110258}{{\ttfamily hep-th/0110258}}].

\bibitem{Cunha2024}
P.~V.~P. Cunha, C.~A.~R. Herdeiro and J.~a. P.~A. Novo, \emph{{Light rings on
  stationary axisymmetric spacetimes: Blind to the topology and able to
  coexist}}, \href{http://dx.doi.org/10.1103/PhysRevD.109.064050}{\emph{Phys.
  Rev. D} {\bfseries 109} (2024) 064050},
  [\href{https://arxiv.org/abs/2401.05495}{{\ttfamily 2401.05495}}].

\bibitem{Wald1984}
R.~M. Wald, \emph{{General Relativity}}.
\newblock Chicago Univ. Pr., Chicago, USA, 1984,
  \href{http://dx.doi.org/10.7208/chicago/9780226870373.001.0001}{10.7208/chicago/9780226870373.001.0001}.

\bibitem{Carter1973}
B.~Carter, \emph{Black hole equilibrium states},  in \emph{Black Holes}
  (C.~Dewitt and B.~Dewitt, eds.), pp.~57--214.
\newblock Gordon and Breach Science Publishers, New York, 1973.

\bibitem{Hawking1971}
S.~W. Hawking, \emph{{Black holes in general relativity}},
  \href{http://dx.doi.org/10.1007/BF01877517}{\emph{Commun. Math. Phys.}
  {\bfseries 25} (1972) 152--166}.

\bibitem{Darmois1927}
G.~Darmois, \emph{Les {\'e}quations de la gravitation einsteinienne}.
\newblock No.~25. Gauthier-Villars et. cie., 1927.

\bibitem{Israel1966}
W.~Israel, \emph{Singular hypersurfaces and thin shells in general relativity},
  {\emph{Il Nuovo Cimento B (1965-1970)} {\bfseries 44} (1966) 1--14}.

\bibitem{Nash1954}
J.~Nash, \emph{${C}^1$ isometric imbeddings}, {\emph{Annals of Mathematics}
  {\bfseries 60} (1954) 383--396}.

\bibitem{Kuiper1955}
N.~H. Kuiper, \emph{On ${C}^1$-isometric imbeddings. i},
  \href{http://dx.doi.org/https://doi.org/10.1016/S1385-7258(55)50075-8}{\emph{Indagationes
  Mathematicae (Proceedings)} {\bfseries 58} (1955) 545--556}.

\bibitem{Kuiper1955b}
N.~H. Kuiper, \emph{On ${C}^1$-isometric imbeddings. ii},
  \href{http://dx.doi.org/https://doi.org/10.1016/S1385-7258(55)50093-X}{\emph{Indagationes
  Mathematicae (Proceedings)} {\bfseries 58} (1955) 683--689}.

\bibitem{Stephani2003}
H.~Stephani, D.~Kramer, M.~A.~H. MacCallum, C.~Hoenselaers and E.~Herlt,
  \emph{{Exact solutions of Einstein's field equations}}.
\newblock Cambridge Monographs on Mathematical Physics. Cambridge Univ. Press,
  Cambridge, 2003,
  \href{http://dx.doi.org/10.1017/CBO9780511535185}{10.1017/CBO9780511535185}.

\bibitem{Kontou2020}
E.-A. Kontou and K.~Sanders, \emph{{Energy conditions in general relativity and
  quantum field theory}},
  \href{http://dx.doi.org/10.1088/1361-6382/ab8fcf}{\emph{Class. Quant. Grav.}
  {\bfseries 37} (2020) 193001},
  [\href{https://arxiv.org/abs/2003.01815}{{\ttfamily 2003.01815}}].

\bibitem{xAct}
J.~M. Martin-Garcia, A.~Garc{\'\i}a-Parrado, A.~Stecchina, B.~Wardell,
  C.~Pitrou, D.~Brizuela et~al., \emph{{xAct: Efficient tensor computer algebra
  for the Wolfram Language}}, {\emph{{\tt \url{http://www.xact.es}}} (latest
  version Oct. 2021) }.

\end{thebibliography}\endgroup

\end{document}